\newcommand{\Xbbf}{{\mathbb{\mathbf{X}}}}
\newcommand{\Zbbf}{{\mathbb{\mathbf{Z}}}}
\newcommand{\Lp}[3]{{\left\|{#1}\right\|}_{#2}^{#3}}
\newcommand{\LPr}[1]{\left(#1\right)}
\DeclareMathOperator*{\argmin}{arg\,min}
\newcommand{\Ac}{\mathcal{A}}
\newcommand{\Bc}{\mathcal{B}}
\newcommand{\Cc}{\mathcal{C}}
\newcommand{\Ec}{\mathcal{E}}
\newcommand{\Fc}{\mathcal{F}}
\newcommand{\Nc}{\mathcal{N}}
\newcommand{\Uc}{\mathcal{U}}
\newcommand{\Xc}{\mathcal{X}}
\newcommand{\Yc}{\mathcal{Y}}
\newcommand{\Zc}{\mathcal{Z}}
\newcommand{\Xh}{{\hat{X}}}
\newcommand{\xh}{{\hat{x}}}
\newcommand{\Xt}{{\tilde{X}}}
\def\d{\delta}
\def\e{\epsilon}
\def\l{\lambda}
\let\P\relax
\DeclareMathOperator\P{P}
\newcommand\ie{i.e.,\xspace}
\def\textiid{i.i.d.\@\xspace}
\newcommand\iid{\ifmmode\text{ i.i.d. } \else \textiid \fi}
\newcommand{\ind}{\mathbbmss{1}}
\newcommand{\ex}{{\rm e}}
\newcommand{\wb}{{\bf w} }
\newcommand{\Wb}{{\bf W} }
\newtheorem{question}{Question}
\newtheorem{definition}{Definition}
\newtheorem{theorem}{Theorem}
\newtheorem{lemma}{Lemma}
\newtheorem{remark}{Remark}
\begin{document}

\title{ Theoretical links between   universal and Bayesian compressed sensing algorithms}

\author{Shirin Jalali\thanks{Shirin Jalali is with the Mathematics and Algorithms Group in  Nokia - Bell Labs, New Providence, NJ,  e-mail: Shirin.Jalali@nokia-bell-labs.com}}


\date{} 

\maketitle

\begin{abstract}
Quantized maximum a posteriori (Q-MAP)  is a recently-proposed Bayesian compressed sensing algorithm that, given the source distribution, recovers  $X^n$ from its  linear measurements  $Y^m=AX^n$, where  $A\in\mathds{R}^{m\times n}$   denotes the known measurement matrix. On the other hand,  Lagrangian minimum entropy pursuit (L-MEP) is a universal compressed sensing algorithm that aims at recovering $X^n$ from its  linear measurements  $Y^m=AX^n$, without having access to the source distribution. Both Q-MAP and L-MEP provably achieve the minimum required sampling rates, in noiseless cases where such fundamental limits are known.  L-MEP is based on minimizing a cost function that consists of a linear combination of the conditional empirical entropy of a potential reconstruction vector and its corresponding measurement error. In this paper, using a first-order linear approximation of the conditional empirical entropy function, L-MEP is connected with Q-MAP.  The established connection between L-MEP and  Q-MAP leads to variants of Q-MAP which have  the same asymptotic performance as Q-MAP  in terms of their required sampling rates.  Moreover, these  variants  suggest that  Q-MAP is robust to small error in estimating the source distribution. This robustness  is theoretically proven and the effect of a non-vanishing estimation error on the required sampling rate is characterized. 

\end{abstract}

\section{Introduction}

Consider the classic problem of  compressed sensing: a high-dimensional ``structured'' signal $X^n\in\mathds{R}^n$ is to be recovered from its under-sampled linear measurements $Y^m=AX^n+Z^m$, where $m<n$, $A\in\mathds{R}^{m\times n}$  and $Z^m\in\mathds{R}^{m}$. It is crucial  for the signal $X^n$ to be structured, as otherwise the problem has infinitely many solutions. Classical compressed sensing is mainly concerned with the case where the structure of the source is a known simple structure, such as sparsity in some transform domain or low-rankness.  The problem with these typical underlying assumptions is that in many real applications 
\begin{enumerate}
\item[i)] the underlying structure (or  distribution) of the source is not known (or is only  partially known),
\item[ii)] the source structure is much more complicated than  the typical structures  that are currently considered in compressed sensing. 
\end{enumerate} 
Recovery algorithms that do not address these issues suffer from sub-optimal recovery performance. This means that they  require more measurements  (or  have lower reconstruction quality) compared to an  algorithm that takes these issues into account.

The first issue, i.e., not knowing the source distribution or structure, is an important problem that happens in many other data processing tasks, such as compression, denoising and prediction. To address this issue, researchers in different fields have  looked into   designing   ``universal'' algorithms that do not require  knowledge of the source distribution. In information theory,  an algorithm is called   universal  with respect to  a  class of distributions\footnote{Typically this is the class of all stationary and ergodic processes with a common  finite alphabet.}, if, without knowing the source distribution,  it asymptotically achieves the optimal   performance  for all signals that are generated according to one of the distributions in that specific class  \cite{cover}. Existence of efficient universal algorithms is already  known for a number of tasks such as lossless data compression \cite{LZ} and discrete denoising \cite{dude}.    The problem of universal compressed sensing was originally  introduced in \cite{JalaliM:14-MEP-IT} for non-stochastic sources.  Later, \cite{BaDu11,ZhuB:15,JalaliP:17-IT}  studied the problem of universal compressed sensing of stochastic processes. They proposed an implementable yet a computationally-demanding universal algorithm  for compressed sensing. The algorithm was referred to as Lagrangian minimum entropy pursuit (L-MEP) in \cite{JalaliM:14-MEP-IT}. For stationary memoryless sources that are distributed according to a mixture of discrete and continuous distributions,  L-MEP can be proved to achieve the minimum required sampling rate  \cite{JalaliM:14-MEP-IT}. 
 
 Addressing the second issue, i.e., designing efficient and robust \emph{Bayesian}  compressed sensing algorithms that  take advantage of  known complicated structures of a source is more specific to compressed sensing and has already been addressed to a great extent in other areas such as denoising and data compression.  
 One potential path to address this issue is to transfer knowledge from  other areas  and  design compression-based \cite{beygi2017efficient} or denoising-based \cite{DAMP} compressed sensing recovery algorithms. This approach turns out to be very effective and   yields  algorithms that  achieve state-of-the-art performance in, for example,  compressive imaging.   However, this method does not directly solve the Bayesian compressed problem in which the source distribution is known or learned from a large training dataset. For any  given  source distribution, a compression-based  (denoising-based) approach requires access to an  efficient compression (denoising) algorithm tailored for that known distribution. But such compression or denoising algorithm might not be available for  a general distribution of interest. Moreover, even in cases where efficient compression/denoising algorithms exist, given a big dataset, one can potentially go beyond them, by first learning all  complicated structures of the source and then designing a recovery algorithm that takes advantage of the learned model.  
 
 Quantized maximum a posteriori (Q-MAP)  is a recently-proposed Bayesian compressed sensing algorithm that aims at addressing the just-described issue \cite{jalali2016qmap}. Consider $X^n$ drawn from a general  distribution.  Q-MAP recovers  $X^n$ from its noisy  linear measurements  $Y^m=AX^n+Z^m$ by taking advantage of the source full structure captured by its distribution. 
 
Both L-MEP and Q-MAP  entail solving    discrete optimizations over the space of quantized reconstruction sequences. Let $\Xc$ denote the  alphabet of the source, which is typically an interval in $\mathds{R}$.  Then, we define $\Xc_b$ to denote the discrete set derived by quantizing all elements in $\Xc$ into $b$ bits. (The exact quantization operation will be defined later in the notation section.)  Using this definition, L-MEP and Q-MAP both try to minimize their respective cost functions over $\Xc_b^n$, for some appropriately chosen quantization  level $b$.   In  the case of L-MEP, the optimization can be written as
\begin{align}
\Xh^n_{\rm LMEP}=\argmin_{u^n\in\Xc_b^n} \LPr{f_1(u^n)+\lambda \Lp{Au^n-Y^m}{2}{2}},\label{eq:lmep-f1}
\end{align}
where $\lambda>0$ and $f_1(u^n)$ is a function, to be defined explicitly  in Section \ref{eq:from-lmep-to-qmap}, that estimates the level of structuredness of candidate reconstruction sequence  $u^n$. Note that L-MEP is a universal algorithm and  $f_1$ is a generic function that does not depend on the source distribution.  On the other hand, the optimization required  in  Q-MAP can be written as 
\begin{align}
\Xh^n_{\rm QMAP}=\argmin_{u^n\in\Xc_b^n} \LPr{f_2(u^n)+\lambda \Lp{Au^n-Y^m}{2}{2}},\label{eq:qmap-f2}
\end{align}
where $\lambda>0$ and $f_2(u^n)$, to be defined explicitly  in Section \ref{eq:from-lmep-to-qmap}, is a function that depends on the \emph{known} source distribution. Comparing \eqref{eq:lmep-f1} and \eqref{eq:qmap-f2} reveals that the only difference between L-MEP and Q-MAP  is in how they enforce the source model (structure). Interestingly, as shown in \cite{JalaliM:14-MEP-IT} and \cite{jalali2016qmap},  in the noiseless setting, asymptotically, both optimizations can recover $X^n$ from its  samples $Y^m=AX^n$, as long as the sampling rate ($m/n$) is larger than a rate that seems to be the fundamental limit for almost lossless recovery. This seems counterintuitive, as one of them (Q-MAP) is using the full source distribution and the other one (L-MEP) is a universal algorithm.  The first objective of this paper is to understand why two seemingly different optimizations have the same asymptotic performance and require the same number of measurements.   Towards this goal, we show how using a first-order approximation of function $f_1$, L-MEP reduces to an optimization similar to Q-MAP.  This approximation  not only shows a fundamental connection between Q-MAP and L-MEP, but also  enables  us to derive   variants of Q-MAP.  While these variants all have the same asymptotic performance as Q-MAP and L-MEP, each uses a different cost function to  measure the level of  structuredness  of signals in $\Xc_b^n$. 

One advantage of the variants of Q-MAP derived from L-MEP is that they provide a roadmap into studying important properties of  Q-MAP, such as its robustness to source estimation error. As mentioned earlier, Q-MAP is a Bayesian compressed sensing recovery method and requires knowing the source distribution. More precisely,  the function $f_2$  in \eqref{eq:qmap-f2} that captures the source distribution involves  some weights that are a function of  the source distribution. However, in practice,  the input distribution is rarely known in advance and typically has to be learned from some  training dataset.  This raises an important question regarding  robustness of Q-MAP to error in estimating the source distribution.   To address this issue, we first note that  in one of  the variants of Q-MAP derived from L-MEP, the   weights in function $f_2$, instead of the source distribution, are a function of the empirical distribution  of $[X^n]_b$.  Therefore,  this variant of Q-MAP already employs estimated weights instead of the ideal weights required by Q-MAP. Using this result as a first step, in this paper, we  show that in general Q-MAP is in fact robust to small estimation error. We  also characterize the required increase in the sampling rate to compensate for a non-vanishing estimation error.  

We should mention  that  since L-MEP and Q-MAP  involve solving  discrete optimizations over the space of high-resolution quantized reconstruction sequences, they are  both computationally demanding algorithms. To address the computational complexity issue, the authors in \cite{jalali2016qmap} propose an iterative algorithm  based on projected gradient descent that aim at approximating the solution of the Q-MAP algorithm. This approach partially addresses the computational-complexity issue.   

The organization of the paper is as follows. Section \ref{sec:background} includes the notation used throughout the paper, and some background information on   measures of structuredness, conditional empirical entropy, and mixing processes. Section \ref{eq:from-lmep-to-qmap} establishes the  connection between Q-MAP and  L-MEP and how one can be derived from the other one.  Section \ref{sec:robustness} explores the robustness of Q-MAP to its estimate of the source distribution. Section \ref{sec:proofs} presents the proofs of the derived results and Section \ref{sec:conclusion} concludes the paper.


\section{Background}\label{sec:background}


\subsection{Notation}\label{sec:notation}
Given  $(x_1,x_2,\ldots,x_n)\in\mathds{R}^n$ and  $(i,j)\in\{1,\ldots,n\}^2$,  $x_i^j\triangleq (x_i,x_{i+1},\ldots,x_j).$ 
In the special cases when $i=1$ or $i=j$, $x_1^j$ and $x_j^j$ are denoted by $x^j$ and $x_j$, respectively.  Sets are denoted by calligraphic letters such as $\Xc$ and $\Yc$. The size of a finite set $\Xc$ is denoted by $|\Xc|$. For $a\in\mathds{R}$, $\delta_a$ denotes the Dirac measure on $\mathds{R}$ with an atom at $a$.  Throughout the paper, $\log$ and $\ln$ refer to logarithm in base 2 and the natural logarithm, respectively.

For $x\in\mathds{R}$, let $\lfloor x \rfloor$ denote the largest integer number smaller than or equal to $x$. For $b\in\mathds{N}^+$,   the $b$-bit quantized version of $x$ is denoted by $[x]_b$ and is defined as 
\begin{align}
[x]_b\triangleq\lfloor x\rfloor+\sum_{i=1}^b2^{-i}a_i,\end{align} 
where for all $i$, $a_i\in\{0,1\}$, and $0.a_1a_2\ldots$ denotes the binary representation of $x-\lfloor x\rfloor$.  Given $x^n\in\mathds{R}^n$, $[x^n]_b\triangleq([x_1]_b,\ldots,[x_n]_b).$ Given set $\Xc\subset\mathds{R}$, let $\Xc_b$ denote the b-bit quantized version of $\Xc$. That is,
\[
\Xc_b=\{[x]_b: \; x\in\Xc\}.
\]

\subsection{Measures of structuredness}\label{sec:measure-structure}

As described earlier, compressed sensing is about recovering structured real-valued signals from their linear under-sampled measurements. Hence, it is important to understand what it means for a real-valued signal to be structured and how one can measure the level of structuredness of a signal. One classic example of structured signals in compressed sensing is the set of  sparse signals.  Consider an independent identically distributed (i.i.d.) process $\Xbbf=\{X_i\}_{i=1}^{\infty}$, where $X_i\sim (1-p)\delta_0+pf_c$. Here, $p\in[0,1]$ and $f_c$ denotes the probability density function (pdf) of an absolutely continuous distribution. For any value $p\neq 1$, vector $X^n$, with high probability, is  sparse and hence  structured  and   can be recovered from its under-sampled measurements. However, for $p=1$, process $\Xbbf$ is an unstructured process and  $X^n$ cannot be recovered from less than $n$ measurements. For general stationary stochastic processes, it is desirable to have a measure of structuredness that automatically distinguishes  between the cases where compressed sensing is possible and the cases where it is not possible to recover the source from its undersampled measurements.  

Information dimension (ID) of a stochastic process  defined in \cite{JalaliP:17-IT} serves this purpose and provides a measure of structuredness for stationary stochastic processes. ID of a stationary process is closely related to its rate-distortion dimension, which is yet another measure of structuredness. (Refer to  \cite{kawabata1994rate}, \cite{geiger2017information} and \cite{RezagahJ:17} for more information on  this fundamental connection.)

\begin{definition}[ID of a stochastic process]
The $k$-th order upper  and lower IDs of a stochastic process $\Xbbf=\{X_i\}_{i=1}^{\infty}$ are denoted by $\bar{d}_k(\Xbbf)$ and $\underline{d}_k(\Xbbf)$ and  defined as  
\[
\bar{d}_k(\Xbbf)= \limsup_{b\to\infty} {H([X_{k+1}]_b|[X^k]_b)\over b},
\]
and 
\[
\underline{d}_k(\Xbbf)= \liminf_{b\to\infty} {H([X_{k+1}]_b|[X^k]_b)\over b},
\]
respectively. If $\lim_{k\to\infty}\bar{d}_k(\Xbbf)$ exists, the  upper  ID of process $\Xbbf$ is defined as $\bar{d}_o(\Xbbf)= \lim_{k\to\infty}\bar{d}_k(\Xbbf)$. Also, if $\lim_{k\to\infty}\underline{d}_k(\Xbbf)$ exists, the lower ID of process $\Xbbf$ is defined as $\underline{d}_o(\Xbbf)= \lim_{k\to\infty}\underline{d}_k(\Xbbf)$. In the case where $\bar{d}_o(\Xbbf)=\underline{d}_o(\Xbbf)$, the information dimension of process $\Xbbf$ is defined as $d_o(\Xbbf)=\bar{d}_o(\Xbbf)=\underline{d}_o(\Xbbf)$.
\end{definition}
It is straightforward to show that the ID of an i.i.d.~process $\Xbbf=\{X_i\}_{i=1}^{\infty}$ is equal to  the R\'enyi information dimension of its first order marginal distribution ($X_1$). In other words, the definition of ID for random processed is a generalization of R\'enyi's notion of ID, defined in \cite{renyi1959dimension}  for random variables and random vectors.   For the sparse i.i.d.~process described earlier, it can be proved that \cite{renyi1959dimension},
\[
d_o(\Xbbf)=p.
\]
This suggests that, at least in this special case, as desired, the ID of process $\Xbbf$  serves as a measure of structuredness. In this case, the maximum ID is achieved by processes with $p=1$, in which case, as explained earlier, compressed sensing is infeasible.  In fact, it can be proved that for all stationary processes $\Xbbf$ with $H(\lfloor X_1\rfloor)<\infty$, $\bar{d}_k(\Xbbf)\leq 1$ and  $\underline{d}_k(\Xbbf)\leq 1$ \cite{JalaliP:17-IT}. Moreover, for a  stochastic process $\Xbbf$ satisfying some  mixing conditions,  \cite{JalaliP:17-IT}  proves that  L-MEP  is a universal  algorithm that  recovers $X^n$  with almost zero distortion, from slightly more than $n\bar{d}_o(\Xbbf)$ random linear measurements. Therefore, intuitively speaking, in general, a stationary  process with an ID strictly smaller than one can be categorized as a structured process. On the other hand, processes whose ID is equal to one can be considered as unstructured processes that are not suitable for compressed sensing.

\subsection{Conditional empirical entropy}\label{sec:cond-emp-entropy}
In Section \ref{sec:measure-structure} we reviewed the definition of the ID of a stochastic process as a measure of its structuredness. In order to develop a universal compressed sensing algorithm, we also need a measure that  estimates the level of  structuredness of an individual   vector in $x^n\in\mathds{R}^n$. One approach to develop such a measure is to first quantize $x^n$ as $[x^n]_b$ to derive a discrete sequence and then  use one of the standard universal measures of complexity that already exist  in information theory, which are designed for sequences drawn from  discrete alphabets.  One such measure is the  conditional empirical entropy function that is also closely related to the definition of ID of a stochastic process is the  conditional  empirical entropy  \cite{cover}.

Consider sequence $u^n\in\Uc^n$, where $\Uc$ denotes a discrete set.  Given $k\in\mathds{N}^+$, the $k$-th order empirical distribution induced by $u^n$ is denoted by $\hat{p}_{k}(\cdot|u^n)$, and is defined as follows. For every $a^k\in\Uc^k$,
\begin{align}
  \hat{p}_{k}(a^k|u^n) & ={|\{i: u_{i-k}^{i-1}=a^k, k+1\leq i\leq n\}|\over n-k}\nonumber\\
&={1\over n-k}\sum_{i=k+1}^n\ind_{u_{i-k}^{i-1}=a^{k}},\label{eq:emp-dist}
\end{align}
where $\ind_{\Ec}$ denotes the indicator function of event $\Ec$. 

Conditional empirical entropy function is a well-known measure of complexity for  finite-alphabet sequences \cite{shields}.   The $k$-th order conditional empirical entropy of  $u^n$ is denoted by $\hat{H}_k(u^n)$ and is defined as
\[
\hat{H}_k(u^n)=H(U_{k+1}|U^k),
\]
where $U^{k+1}$ is distributed as $\hat{p}_{k+1}(\cdot|u^n)$. More explicitly,
\begin{align}
\hat{H}_k(u^n)&=-\sum_{a^{k+1}\in\Uc^{k+1}}\hat{p}_{k+1}(a^{k+1})\log {\hat{p}_{k+1}(a^{k+1})\over \hat{p}_{k}(a^{k})}.
\end{align}
Note that $  \hat{p}_{k}(a^{k})= \sum_{a_{k+1}\in\Uc} \hat{p}_{k+1}(a^{k+1})$.

\subsection{$\Psi^*$-mixing processes}
In this section, we briefly review $\Psi^*$-mixing processes. The reason we are interested in such processes is that for developing compressed sensing algorithms that are also applicable to sources with memory, we need to  be able to estimate the structuredness of different processes from their realizations. In Section \ref{sec:cond-emp-entropy}, we reviewed the conditional empirical entropy function which is one of the tools that enable us to develop such estimators. However, for such estimators to converge,  due to some technical challenges,  some additional constraints on the memory of the input  stochastic process is required. $\Psi^*$-mixing processes are a class processes which satisfy our desired convergence.   

Consider stochastic process $\Xbbf=\{X_i\}_{i=-\infty}^{\infty}$. Given $(j,k)\in\mathds{N}^2$, $j<k$, let $\Fc_j^k$ denote the smallest  $\sigma$ field containing all events generated by $X_j^k$. Then, the function $\psi^*_{\Xbbf}: \mathds{N}^+\to \mathds{R}^+$ is defined as
\begin{align*}
\psi^*_{\Xbbf}(g)\triangleq \sup_{(j,\Ac,\Bc)\in\mathds{N}\times \Fc_{-\infty}^j\times \Fc_{j+g}^{\infty}:  \P(\Ac)\P(\Bc)>0} {\P(\Ac\cap \Bc) \over \P(\Ac)\P(\Bc)},
\end{align*}
\begin{definition}[$\Psi^*$-mixing processes]
Stochastic process $\Xbbf$ is called $\Psi^*$-mixing, if 
\[
\lim_{g\to\infty}\psi^*_{\Xbbf}(g)=1.
\]
\end{definition}
Intuitively speaking, a stochastic process is $\Psi^*$-mixing, if its past and its future are almost independent from each other.   The following result  states the key property of $\Psi^*$-mixing processes that used to prove the main results about L-MEP and Q-MAP algorithms. 
\begin{theorem}[Theorem 6 in \cite{JalaliP:17-IT}]\label{thm:empirical-conv}
Consider a  stationary $\Psi^*$-mixing   process $\Xbbf$ and let process $\Zbbf$ denote the $b$-bit quantized version of process $\Xbbf$. That is, $Z_i=[X_i]_b$, for all $i$. Then,  for any $\epsilon>0$, there exists $g \in\mathds{ N}$, depending only on $\epsilon$ and the
function $\psi_{\Xbbf}(\cdot)$, such that for any $n > 6(k + g)/\epsilon + k$, 
\begin{align*}
\P&\LPr{\sum_{a^k\in\Xc_b^k}\left|\hat{p}_k(a^k|Z^n)-\P(Z^k=a^k)\right|\geq \epsilon}\nonumber\\
&\leq 
 2^{c\epsilon^2/8}(k + g)n^{|\Zc|^k}2^{-{nc\epsilon^2 \over 8(k+g) }},
\end{align*}
where $c=1/(2\ln 2)$.
\end{theorem}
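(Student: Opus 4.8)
The plan is to reduce the statement, which concerns the \emph{overlapping} empirical counts $\hat p_k(a^k|Z^n) = \frac{1}{n-k}\sum_{i=k+1}^n \ind_{Z_{i-k}^{i-1}=a^k}$, to a collection of concentration statements for \emph{non-overlapping, $g$-separated} blocks, for which the $\psi^*$-mixing hypothesis makes the joint law within a multiplicative factor of a product law. Write $N=n-k$ and $p(a^k)=\P(Z^k=a^k)=\P(Z_{i-k}^{i-1}=a^k)$ (the last equality by stationarity), and note that since quantization only coarsens the underlying $\sigma$-fields, $\psi^*_{\Zbbf}(g)\le\psi^*_{\Xbbf}(g)$, so $\Zbbf$ inherits the mixing of $\Xbbf$. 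The obstruction to treating the indicators as independent is that consecutive blocks $Z_{i-k}^{i-1}$ and $Z_{i-k+1}^{i}$ share $k-1$ coordinates; this is exactly what the blocking step below circumvents.

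First I would partition $\{k+1,\dots,n\}$ into the $k+g$ arithmetic progressions $C_1,\dots,C_{k+g}$ with common difference $k+g$. Within any $C_j$ two indices differ by at least $k+g$, so the blocks $\{Z_{i-k}^{i-1}:i\in C_j\}$ are pairwise separated by a time gap of at least $g$; writing $\hat p_k^{(j)}$ for the order-$k$ empirical distribution built from these $|C_j|$ blocks, we get the convex decomposition $\hat p_k(\cdot\,|Z^n)=\sum_{j=1}^{k+g}\frac{|C_j|}{N}\,\hat p_k^{(j)}(\cdot)$. Consequently $\sum_{a^k}\Labs{\hat p_k(a^k|Z^n)-p(a^k)}\ge\epsilon$ forces $\sum_{a^k}\Labs{\hat p_k^{(j)}(a^k)-p(a^k)}\ge\epsilon$ for some $j$, and a union bound over the $k+g$ classes reduces everything to bounding, for one class of size $L$, the probability that $L$ pairwise $g$-separated blocks have $\ell_1$ empirical deviation at least $\epsilon$ from $p$.

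For a fixed class I would run a method-of-types argument. The number of realizable values of $\hat p_k^{(j)}$ is at most $(L+1)^{|\Zc|^k}\le n^{|\Zc|^k}$. For a fixed ``bad'' type $q$ with $\|q-p\|_1\ge\epsilon$, the probability that the $L$ blocks realize $q$ is a sum, over block-strings of that type, of their joint probabilities; iterating the $\psi^*$-mixing inequality across the $L$ $g$-separated blocks bounds each joint probability by $(\psi^*_{\Xbbf}(g))^{L-1}$ times the product of the marginals $p(\cdot)$, and the usual type-class counting then yields $\P(\hat p_k^{(j)}=q)\le(\psi^*_{\Xbbf}(g))^{L-1}\,2^{-L\,D(q\,\|\,p)}$ with $D(q\,\|\,p)$ the relative entropy in bits. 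Pinsker's inequality gives $D(q\,\|\,p)\ge c\|q-p\|_1^2\ge c\epsilon^2$ with $c=1/(2\ln 2)$, so each bad type contributes at most $(\psi^*_{\Xbbf}(g))^{L-1}\,2^{-cL\epsilon^2}$. Since $\psi^*_{\Xbbf}(g)\to1$, I would \emph{now} fix $g$ --- a function of $\epsilon$ and $\psi^*_{\Xbbf}$ only --- so that $\psi^*_{\Xbbf}(g)\le 2^{c\epsilon^2/8}$; with the hypothesis $n>6(k+g)/\epsilon+k$ one checks that $L\ge\lfloor N/(k+g)\rfloor$ is at least a fixed fraction of $n/(k+g)$ and that the $C_j$ are near-equal in size (so the convex-combination step loses nothing), and multiplying the per-type bound by the $n^{|\Zc|^k}$ types and the $k+g$ classes collapses to $2^{c\epsilon^2/8}(k+g)\,n^{|\Zc|^k}\,2^{-nc\epsilon^2/(8(k+g))}$, the leading $2^{c\epsilon^2/8}$ absorbing the $(\psi^*_{\Xbbf}(g))^{L-1}$ correction and the edge effects from the truncation ``$-k$''.

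The step I expect to be the main obstacle is the middle one: establishing the Chernoff/Sanov-type bound $\P(\hat p_k^{(j)}=q)\le(\psi^*_{\Xbbf}(g))^{L-1}2^{-LD(q\,\|\,p)}$ for the \emph{dependent} $g$-separated block indicators with a multiplicative correction depending on $g$ alone, and then verifying that $g$ can be chosen as a function of $\epsilon$ only --- uniformly in $k$ and $n$ --- so that this correction does not erode the $\epsilon^2$ exponent. Once that dependence-to-independence reduction is secured, the residue-class blocking, the polynomial type count, Pinsker, and the bookkeeping around unequal class sizes are all routine.
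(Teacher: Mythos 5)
This paper states Theorem \ref{thm:empirical-conv} as an imported result (Theorem 6 of \cite{JalaliP:17-IT}) and gives no proof of it, so there is nothing internal to compare against; judged on its own, your outline correctly reconstructs the argument used in the cited source: decompose the overlapping counts into $k+g$ residue classes of $g$-separated blocks, use convexity of the $\ell_1$ norm plus a union bound over classes, factorize the joint law of the separated blocks via iterated $\Psi^*$-mixing at a multiplicative cost $(\psi^*_{\Xbbf}(g))^{L-1}$, and finish with the method of types and Pinsker's inequality (which is exactly where $c=1/(2\ln 2)$ and the $n^{|\Zc|^k}$ factor come from), with $g$ fixed so that $\psi^*_{\Xbbf}(g)\le 2^{c\epsilon^2/8}$. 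The constants work out with room to spare under the hypothesis $n>6(k+g)/\epsilon+k$, and the step you flag as the main obstacle is handled by the straightforward iteration of the mixing inequality across the $g$-separated blocks.
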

\begin{remark}
 All memoryless processes and Markov sources with a discrete state space are $\Psi^*$-mixing \cite{shields}. Continuous space Markov processes are not in general $\Psi^*$-mixing.  However, the result of Theorem \ref{thm:empirical-conv}, \ie asymptotic convergence of  the empirical frequencies of their appropriately-quantized versions,   also holds for Markov processes that are weak $\Psi^*_q $-mixing,  defined in \cite{jalali2016qmap}.\footnote{For brevity, we skip stating the definition of this property and refer readers to \cite{jalali2016qmap}.} One example of such processes is a piecewise-constant signal modeled by a first-order Markov process.  All results in this paper mentioned for $\Psi^*$-mixing processes also hold for Markov processes that are weak $\Psi^*_q $-mixing.
 \end{remark}

\section{From L-MEP to Q-MAP}\label{eq:from-lmep-to-qmap}

Consider stationary process $\Xbbf =\{X_i\}_{i=1}^{\infty}$. Lagrangian-MEP (L-MEP) is a universal  algorithm that recovers $X^n$ from its measurements $Y^m=AX^n$, without knowing the source distribution. More precisely, given measurement $Y^m$ and sensing matrix $A$, L-MEP estimates $X^n$ as $\Xh_{\rm LMEP}^n$ by solving the following optimization:
 \begin{align}\label{eq:Lagrangian-MEP}
\Xh_{\rm LMEP}^n=\argmin_{x^n\in\Xc_b^n}\left[\hat{H}_k(u^n)+{\lambda\over n^2}\|Au^n-Y^m\|^2\right],
\end{align}
where $\lambda\in\mathds{R}^+$ (regularization coefficient), $b\in\mathds{N}^+$ (quantization level) and $k\in\mathds{N}^+$ (history size) are parameters that  need to be determined.  
The following theorem  proves that for the right choice of parameters, given enough measurements, L-MEP is a universal compressed sensing algorithm that recovers the source vector at almost zero distortion.
\begin{theorem}[Theorem 8 in \cite{JalaliP:17-IT}]\label{thm:LMEP}
Consider a stationary $\Psi^*$-mixing process $\Xbbf=\{X_i\}_{i=1}^{\infty}$ and let $Y^m=AX^n$, where $A\in\mathds{R}^{m\times n}$. The elements of $A$ are generated i.i.d.~$\Nc(0,1)$. Choose $\d>0$, and   $b=b_n=\lfloor r\log\log n\rfloor$, where $r>1$,
  $\l=\l_n=(\log n)^{2r}$ and
  \[
  m=m_n\geq (1+\d)\bar{d}_o(\Xbbf)n.
  \]
Also select a diverging  sequence $k=k_n$ such that  $k_n=o({\log n\over \log\log n})$.    Let $\Xh_{\rm LMEP}^n$ denote the solution of \eqref{eq:Lagrangian-MEP}.  Then, 
  \[
  {1\over \sqrt{n}}\|X^n-\Xh_{\rm LMEP}^n\|_2\stackrel{P}{\to}0.
  \]
\end{theorem}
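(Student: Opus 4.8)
The plan is to compare the minimizer $\Xh_{\rm LMEP}^n$ against the feasible point $[X^n]_b$ and then upgrade the resulting measurement-consistency estimate to an $\ell_2$-recovery estimate via a one-sided isometry property of the Gaussian matrix; throughout I use $Y^m=AX^n$, so that $A[X^n]_b-Y^m=A([X^n]_b-X^n)$. Since $[X^n]_b\in\Xc_b^n$, optimality in \eqref{eq:Lagrangian-MEP} gives
\[
\hat{H}_k(\Xh_{\rm LMEP}^n)+\frac{\lambda}{n^2}\|A\Xh_{\rm LMEP}^n-Y^m\|^2 \;\le\; \hat{H}_k([X^n]_b)+\frac{\lambda}{n^2}\|A([X^n]_b-X^n)\|^2 .
\]
I would control the right-hand side as follows. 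By Theorem~\ref{thm:empirical-conv}, applied at orders $k$ and $k+1$ to the quantized process $Z_i=[X_i]_b$, the empirical $(k{+}1)$-st order distribution of $[X^n]_b$ concentrates around the true one; combined with uniform continuity of entropy on the simplex and with $\bar d_k\to\bar d_o$, this gives $\hat H_k([X^n]_b)\le(\bar d_o+\epsilon)b+o(1)$ with probability tending to one. For the second term, $\|[X^n]_b-X^n\|_\infty\le 2^{-b}$, and conditioning on $X^n$, $\|A([X^n]_b-X^n)\|^2$ equals $\|[X^n]_b-X^n\|_2^2$ times a $\chi^2_m$ variable, so by the $\chi^2$ upper tail it is $O(mn2^{-2b})$ with high probability; since $\lambda=(\log n)^{2r}$ and $2^{-2b}=\Theta((\log n)^{-2r})$, this term is only $O(1)$. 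Hence the whole right-hand side is at most $B_n:=(\bar d_o+\epsilon)b+O(1)=\Theta(\log\log n)$ with high probability.

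Since $\hat H_k\ge 0$, the displayed inequality yields two facts at the optimum. First, $\hat H_k(\Xh_{\rm LMEP}^n)\le B_n$, so $\Xh_{\rm LMEP}^n$ lies in the discrete set $\mathcal F_n:=\{u^n\in\Xc_b^n:\hat H_k(u^n)\le B_n\}$; by the method of types, $\log|\mathcal F_n|=nB_n(1+o(1))=(\bar d_o+\epsilon)r\,n\log\log n\,(1+o(1))$, where the number of $(k{+}1)$-st order types, $n^{|\Xc_b|^{k+1}}$, contributes only $2^{o(n)}$ precisely because $bk=o(\log n)$, i.e.\ because $k_n=o(\log n/\log\log n)$. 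Second, $\tfrac1{n^2}\|A\Xh_{\rm LMEP}^n-Y^m\|^2\le B_n/\lambda=O(\log\log n/(\log n)^{2r})$; together with the bound on $\|A([X^n]_b-X^n)\|$ and the triangle inequality this gives $\tfrac1{n^2}\|A(\Xh_{\rm LMEP}^n-[X^n]_b)\|^2=O(\log\log n/(\log n)^{2r})=:\alpha_n$.

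The heart of the argument is then a one-sided isometry on $\mathcal F_n-[X^n]_b$: I would show that with probability tending to one, \emph{every} $v=u^n-[X^n]_b$ with $u^n\in\mathcal F_n$ satisfies $\|Av\|^2\ge\beta_n m\|v\|^2$, where $\beta_n=(\log n)^{-c}$ for any fixed $c$ in the window $\tfrac{2r(\bar d_o+\epsilon)}{(1+\delta)\bar d_o}<c<2r$, which is nonempty exactly because $\delta>0$ (and $\epsilon$ is chosen small). For a fixed $v$, conditioning on $X^n$, $\|Av\|^2=\|v\|^2\chi^2_m$, so the $\chi^2$ lower tail gives $\P(\|Av\|^2<\beta_n m\|v\|^2)\le 2^{-m\,g(\beta_n)}$ with $g(\beta_n)\gtrsim\log(1/\beta_n)=c\log\log n$; a union bound over $\mathcal F_n$ closes because $m\,g(\beta_n)\ge(1+\delta)\bar d_o n\cdot\tfrac{c}{2}\log\log n$ strictly exceeds $\log|\mathcal F_n|$ by the choice of $c$. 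Combining with $\|A(\Xh_{\rm LMEP}^n-[X^n]_b)\|^2\le\alpha_n n^2$ gives $\|\Xh_{\rm LMEP}^n-[X^n]_b\|_2^2\le\alpha_n n^2/(\beta_n m)=O(n\log\log n\,(\log n)^{c-2r})=o(n)$ since $c<2r$. As $\|X^n-[X^n]_b\|_2^2\le n2^{-2b}=o(n)$, the triangle inequality yields $\tfrac1n\|X^n-\Xh_{\rm LMEP}^n\|_2^2\to 0$ in probability, which is the claim.

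I expect the last step, and in particular the exact matching of the rates, to be the main obstacle. The feasible set $\mathcal F_n$ is super-exponentially large, of size $2^{\Theta(n\log\log n)}$, so a union bound can succeed only if the per-vector failure probability is itself super-exponentially small, i.e.\ only if $\beta_n$ is allowed to vanish; this is affordable precisely because $\lambda$ is taken polynomially large in $\log n$ (which is what forces $\alpha_n/\beta_n\to 0$), and the constant $c$ fits into the required window only thanks to the strict slack factor $1+\delta$ in $m\ge(1+\delta)\bar d_o n$. A secondary point requiring care is that Theorem~\ref{thm:empirical-conv} must still yield a vanishing failure probability although $b=b_n$ and $k=k_n$ both diverge; this, and the entropy-continuity estimate used above, again rely on the prescribed growth rates $b_n=\lfloor r\log\log n\rfloor$ and $k_n=o(\log n/\log\log n)$.
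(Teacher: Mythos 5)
Your proposal is correct and takes essentially the same route as the paper's (the paper itself defers this proof to Theorem 8 of \cite{JalaliP:17-IT}, and reuses exactly this skeleton in its own proofs of Theorems \ref{thm:noisycoeffsbound}--\ref{cor:nodel-mismatch}): compare the cost of the minimizer with that of $[X^n]_b$, bound ${1\over b}\hat{H}_k([X^n]_b)$ by $\bar{d}_o(\Xbbf)+\e$ via the $\Psi^*$-mixing concentration, confine the minimizer to a low-complexity set of size roughly $2^{nb(\bar{d}_o(\Xbbf)+\e)}$, and close with a union bound over that set using the one-sided $\chi^2$ lower tail with an isometry constant $1-\tau=(\log n)^{-c}$ vanishing polylogarithmically, in precisely the parameter window you identify (made nonempty by the slack $\d>0$ and killed in the end by $\l=(\log n)^{2r}$). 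The only cosmetic differences are that the paper bounds the cardinality of the low-complexity set through Lempel-Ziv code lengths and unique decodability rather than a method-of-types count, and controls $\|A(X^n-[X^n]_b)\|$ via $\sigma_{\max}(A)\le\sqrt{n}+2\sqrt{m}$ rather than a $\chi^2$ upper tail.
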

Theorem \ref{thm:LMEP} implies that for $\Psi^*$-mixing processes, asymptotically, in the noiseless setting, as long as the sampling rate is larger than the upper ID of the source, L-MEP recovers the source almost with zero distortion, without knowing the source distribution or structure.

 L-MEP, described in \eqref{eq:Lagrangian-MEP}, employs the conditional empirical entropy function  as a measure of structuredness. However, conditional empirical entropy function is a highly non-linear function of  empirical distribution, and this makes finding the minimizer of \eqref{eq:Lagrangian-MEP} a challenging task. To address the computational complexity  issue and to move toward deriving  computationally-efficient algorithms, consider the following optimization, called approximate MEP (AMEP), where the conditional empirical entropy function is replaced by a linear function of the empirical distribution:
\begin{align}\label{eq:A-MEP}
\Xh^n_{\rm AMEP}=
&\argmin_{x^n\in\Xc_b^n}\left[\sum_{a^{k+1}\in\Xc_b^{k+1}} w_{a^{k+1}} \hat{p}_{k}(a^{k+1}|u^n) +{\lambda\over n^2}\|Au^n-Y^m\|^2\right],
\end{align}
Here, $\wb=(w_{a^{k+1}}: a^{k+1}\in\Xc_b^{k+1})$ are fixed non-negative weights that need to be determined.  As $\hat{H}_k(u^n)$ in L-MEP captures the structuredness of a candidate reconstruction sequence $u^n$,  $\sum_{a^{k+1}\in\Xc_b^{k+1}} w_{a^{k+1}} \hat{p}_{k}(a^{k+1}|u^n)$ is expected to play a similar role in  AMEP. This of course depends on the choice of the weights. Given weights $\wb=(w_{a^{k+1}}: a^{k+1}\in\Xc_b^{k+1})$, define function $c_{\wb}:\Xc_b^n\to \mathds{R}$, as follows. For $u^n\in\Xc_b^n$,
\begin{align}
c_{\wb}(u^n)=\sum_{a^{k+1}\in\Xc_b^{k+1}} w_{a^{k+1}} \hat{p}_{k}(a^{k+1}).
\end{align}
Using this definition, \eqref{eq:A-MEP} can be written as $\Xh^n_{\rm AMEP}=\argmin_{x^n\in\Xc_b^n}[c_{\wb}(u^n)+{\lambda\over n^2}\|Au^n-Y^m\|^2]$. 
To understand the performance of AMEP and the role of weights $\wb$ in capturing the level of structuredness of different sequences, first we need to determine a way to set the weights $\wb$.  After determining the weights, a key question is how the performances of  AMEP and LMEP compare with each other. Since  $\hat{H}_k$ is a highly non-linear function, it is not clear if there exists a set of weights that make the performance of the two optimizations  comparable. The following theorem  addresses both  questions and  proves that  \eqref{eq:Lagrangian-MEP} and \eqref{eq:A-MEP} in fact have the same performance, if the weights  $\wb$ are set appropriately.
\begin{theorem}\label{lemma:AMEP}
  Given measurements $y^m=Ax^n$ and measurement matrix $A\in\mathds{R}^{m\times n}$, let $\xh^n_{\rm LMEP}$  denote a minimizer of  L-MEP optimization described in \eqref{eq:Lagrangian-MEP}. Let $\wb=(w_{a^{k+1}}: a^{k+1}\in\Xc_b^{k+1})$ denote the set of partial derivatives of the $k$-order conditional empirical entropy function $ \hat{H}_k$ evaluated at $\hat{p}_{k+1}(\cdot|\xh^n_{\rm LMEP})$. That is, for $a^{k+1}\in\Xc_b^{k+1}$, let
  \begin{align}\label{eq:optimal-weights}
  w_{a^{k+1}}& = {\partial \hat{H}_k \over\partial \hat{p}_{k+1}(a^{k+1})}\Big|_{\hat{p}_{k+1}(\cdot|\xh^n_{\rm LMEP})}\nonumber\\
  &=\log {\hat{p}_k(a^k|\xh^n_{\rm LMEP})\over  \hat{p}_{k+1}(a^{k+1}|\xh^n_{\rm LMEP})}.
  \end{align}
Then the solution of the  AMEP optimization described in \eqref{eq:A-MEP} for the specified set of weights $\wb$ is  a minimizer of L-MEP optimization \eqref{eq:Lagrangian-MEP} as well.
\end{theorem}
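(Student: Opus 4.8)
The plan is to show that the linear functional $c_{\wb}$ appearing in AMEP is, for the prescribed weights, a global upper bound on $\hat{H}_k$ that is \emph{tight} precisely at $\xh^n_{\rm LMEP}$; once this is known the theorem follows by a three-line comparison. Write the two costs as $J_{\rm LMEP}(u^n)=\hat{H}_k(u^n)+\tfrac{\lambda}{n^2}\|Au^n-Y^m\|^2$ and $J_{\rm AMEP}(u^n)=c_{\wb}(u^n)+\tfrac{\lambda}{n^2}\|Au^n-Y^m\|^2$; since the quadratic terms coincide, it suffices to prove (i) $\hat{H}_k(u^n)\le c_{\wb}(u^n)$ for all $u^n\in\Xc_b^n$ and (ii) $\hat{H}_k(\xh^n_{\rm LMEP})=c_{\wb}(\xh^n_{\rm LMEP})$. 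Indeed, granting (i) and (ii), for any minimizer $\xh^n_{\rm AMEP}$ of $J_{\rm AMEP}$ we get
\[
J_{\rm LMEP}(\xh^n_{\rm AMEP})\le J_{\rm AMEP}(\xh^n_{\rm AMEP})\le J_{\rm AMEP}(\xh^n_{\rm LMEP})=J_{\rm LMEP}(\xh^n_{\rm LMEP})\le J_{\rm LMEP}(\xh^n_{\rm AMEP}),
\]
using (i), optimality of $\xh^n_{\rm AMEP}$ for $J_{\rm AMEP}$, (ii), and optimality of $\xh^n_{\rm LMEP}$ for $J_{\rm LMEP}$; hence all the inequalities are equalities and $\xh^n_{\rm AMEP}$ minimizes $J_{\rm LMEP}$.

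To prove (i) and (ii) I would view the $k$-th order conditional empirical entropy as a function on the $(k+1)$-block probability simplex,
\[
p\longmapsto\hat{H}_k[p]:=-\sum_{a^{k+1}\in\Xc_b^{k+1}}p(a^{k+1})\log\frac{p(a^{k+1})}{\sum_{c\in\Xc_b}p(a^k,c)},
\]
and establish that it is (a) positively homogeneous of degree one, and (b) concave. Property (a) is immediate since the common scaling cancels inside the logarithm; together with differentiability on the relative interior it yields Euler's identity $\sum_{a^{k+1}}p(a^{k+1})\,\partial_{p(a^{k+1})}\hat{H}_k[p]=\hat{H}_k[p]$. Property (b) follows by applying the log-sum inequality term by term: for $p=\theta p_1+(1-\theta)p_2$ with $\theta\in[0,1]$, the choice $a_1=\theta p_1(a^{k+1}),\,a_2=(1-\theta)p_2(a^{k+1}),\,b_1=\theta p_1(a^k),\,b_2=(1-\theta)p_2(a^k)$ gives $p(a^{k+1})\log\frac{p(a^{k+1})}{p(a^k)}\le\theta p_1(a^{k+1})\log\frac{p_1(a^{k+1})}{p_1(a^k)}+(1-\theta)p_2(a^{k+1})\log\frac{p_2(a^{k+1})}{p_2(a^k)}$, and summing over $a^{k+1}$ and negating gives $\hat{H}_k[p]\ge\theta\hat{H}_k[p_1]+(1-\theta)\hat{H}_k[p_2]$.

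With (a) and (b) in hand, a direct differentiation of $\hat{H}_k$ reproduces the weight formula \eqref{eq:optimal-weights}, so that $c_{\wb}(u^n)=\langle\nabla\hat{H}_k[p_0],\hat{p}_{k+1}(\cdot|u^n)\rangle$ with $p_0:=\hat{p}_{k+1}(\cdot|\xh^n_{\rm LMEP})$. Evaluating at $u^n=\xh^n_{\rm LMEP}$ and using Euler's identity gives $c_{\wb}(\xh^n_{\rm LMEP})=\langle\nabla\hat{H}_k[p_0],p_0\rangle=\hat{H}_k[p_0]=\hat{H}_k(\xh^n_{\rm LMEP})$, which is (ii). For (i), concavity gives the supporting-hyperplane inequality $\hat{H}_k[p]\le\hat{H}_k[p_0]+\langle\nabla\hat{H}_k[p_0],p-p_0\rangle$ for every probability vector $p$, and Euler's identity cancels the constant, leaving $\hat{H}_k[p]\le\langle\nabla\hat{H}_k[p_0],p\rangle$; specializing to $p=\hat{p}_{k+1}(\cdot|u^n)$ yields $\hat{H}_k(u^n)\le c_{\wb}(u^n)$.

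The one point that needs care is the boundary case in which some coordinate of $p_0$ vanishes, so that the corresponding weight $w_{a^{k+1}}=\log\big(\hat{p}_k(a^k|\xh^n_{\rm LMEP})/\hat{p}_{k+1}(a^{k+1}|\xh^n_{\rm LMEP})\big)$ is $+\infty$ and the gradient at $p_0$ must be interpreted as a one-sided directional derivative along the simplex. Adopting the conventions $0\cdot\infty=0$ and $0\log0=0$, the supporting-hyperplane inequality in (i) still holds (it is vacuous when its right-hand side is $+\infty$), the quantity $c_{\wb}(\xh^n_{\rm LMEP})$ remains finite and equal to $\hat{H}_k(\xh^n_{\rm LMEP})$, and any minimizer of $J_{\rm AMEP}$ necessarily has finite cost and therefore places zero empirical mass on every $(k+1)$-block outside the support of $p_0$, so the comparison chain above goes through unchanged. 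I expect this boundary bookkeeping, rather than the convexity computation, to be the only subtle step.
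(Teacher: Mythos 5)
Your proof is correct and follows essentially the same route as the paper's: concavity of $\hat{H}_k$ as a function of $\hat{p}_{k+1}$ supplies the supporting-hyperplane bound $\hat{H}_k(u^n)\le c_{\wb}(u^n)$, the identity $c_{\wb}(\xh^n_{\rm LMEP})=\hat{H}_k(\xh^n_{\rm LMEP})$ gives tightness at the L-MEP minimizer, and the same four-term comparison chain using the two optimality properties finishes the argument. The only differences are cosmetic --- you derive concavity from the log-sum inequality and the tightness from Euler's identity for the degree-one homogeneous extension, where the paper cites a reference and computes directly, and your bookkeeping for the zero-probability blocks (where $w_{a^{k+1}}=+\infty$) addresses a boundary case the paper's proof passes over silently.
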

The proof is presented in Section \ref{proof:AMEP}


Theorem \ref{lemma:AMEP} proves that   given the empirical distribution of a minimizer of \eqref{eq:Lagrangian-MEP},  we could  calculate the coefficients according to \eqref{eq:optimal-weights} and solve the optimization described in \eqref{eq:A-MEP} instead of the one described in  \eqref{eq:A-MEP}. The challenge of course is that it is not clear how to find the empirical distribution of the solution of \eqref{eq:Lagrangian-MEP}. In the noiseless setting,  L-MEP renders an almost zero-distortion reconstruction of the input sequence. In other words, the minimizer of \eqref{eq:Lagrangian-MEP}, asymptotically, is very close to the desired input signal $X^n$. Therefore, it seems plausible that if one computes the coefficients based on the empirical distribution of the quantized input signal instead of the empirical distribution of the of solution \eqref{eq:Lagrangian-MEP}, the performance of the derived algorithm is still close to that of L-MEP. The following theorem proves that this is in fact the case.

\begin{theorem}\label{thm:AMEP-input-signal}
Consider a stationary $\Psi^*$-mixing  process $\Xbbf $. Choose $r>1$, $\d>0$, and let $b=b_n=\lfloor r\log\log n\rfloor$,
  $\l=\l_n=(\log n)^{2r}$ and
  \[
  m=m_n\geq (1+\d)\bar{d}_o(\Xbbf)n.
  \]
Also, choose a diverging sequence $k=k_n$ such that $k_n=o({\log n\over \log\log n})$.    Assume that the decoder observes  measurements $Y^m=AX^n$, and the decoder's coefficients $\wb=(w_{a^{k+1}}: a^{k+1}\in\Xc_b^{k+1})$ are set as
  \begin{align}\label{eq:input-signal}
  w_{a^{k+1}} &= {\partial \hat{H}_k \over\partial \hat{p}_{k+1}(a^{k+1})}\Big|_{\hat{p}_{k+1}(\cdot|[X^n]_b)}\nonumber\\
  &=\log {\hat{p}_k(a^k|[X^n]_b)\over  \hat{p}_{k+1}(a^{k+1}|[X^n]_b)}.
  \end{align}
  The entries of the measurement matrix are generated i.i.d.~according to $\Nc(0,1)$ distribution. The decoder computes
  \[
\Xh^n=\argmin_{u^n\in\Xc_b^n}\left[c_{\wb}(u^n) +{\lambda\over n^2}\|Au^n-Y^m\|^2\right].
\]
Then, as $n$ grows to infinity,
\[
{1\over \sqrt{n}}\|X^n-\Xh^n\|_2\stackrel{\rm P}{\to} 0.
\]
\end{theorem}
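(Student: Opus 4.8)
The plan is to exploit a sharp relation between the linear cost $c_{\wb}$ with the input-signal weights \eqref{eq:input-signal} and the conditional empirical entropy $\hat{H}_k$, and then to rerun the concentration-and-counting argument behind Theorem~\ref{thm:LMEP}. View $\hat{H}_k$ as a function of the $(k+1)$-st order empirical measure, extended to all nonnegative measures $q$ on $\Xc_b^{k+1}$ via $\hat{H}_k(q)=\sum_{a^k\in\Xc_b^k}\sum_{b\in\Xc_b}q(a^k,b)\log\frac{\sum_{c\in\Xc_b}q(a^k,c)}{q(a^k,b)}$. This extension is concave (each inner sum is a sum of the jointly concave map $(x,y)\mapsto x\log(y/x)$ composed with linear maps) and positively homogeneous of degree one. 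By \eqref{eq:input-signal}, $\wb$ is exactly its gradient at $p_0:=\hat{p}_{k+1}(\cdot\,|[X^n]_b)$, so the tangent-plane inequality for concave functions together with Euler's relation $\langle\wb,p_0\rangle=\hat{H}_k(p_0)$ for $1$-homogeneous functions gives
\[
\hat{H}_k(u^n)\ \le\ c_{\wb}(u^n)\quad\text{for every }u^n\in\Xc_b^n,\qquad c_{\wb}([X^n]_b)=\hat{H}_k([X^n]_b).
\]
(Patterns absent from $[X^n]_b$ get weight $+\infty$, so $[X^n]_b$ is feasible and the minimizer uses only $(k+1)$-patterns present in $[X^n]_b$; the above holds on that face of the simplex.)

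From optimality of $\Xh^n$, feasibility of $[X^n]_b$, and $Y^m=AX^n$ we get $c_{\wb}(\Xh^n)+\frac{\lambda}{n^2}\|A\Xh^n-AX^n\|^2\le c_{\wb}([X^n]_b)+\eta_n$, with $\eta_n:=\frac{\lambda}{n^2}\|A([X^n]_b-X^n)\|^2$; hence $c_{\wb}(\Xh^n)\le T_n$ and $\|A\Xh^n-AX^n\|^2\le\frac{n^2}{\lambda}T_n$, where $T_n:=\hat{H}_k([X^n]_b)+\eta_n$. Because $\|[X^n]_b-X^n\|_\infty<2^{-b}$ and the rows of $A$ are i.i.d.\ $\Nc(0,I_n)$, a $\chi^2$ upper-tail bound makes $\eta_n=O(1)$ with high probability ($\lambda_n 2^{-2b_n}=O(1)$ and $m_n/n=O(1)$ for the prescribed parameters); and by Theorem~\ref{thm:empirical-conv} the empirical $(k+1)$-measure of $[X^n]_b$ concentrates on $\P([X^{k+1}]_b=\cdot)$, so, exactly as in the proof of Theorem~\ref{thm:LMEP}, $\hat{H}_k([X^n]_b)\le b\,\bar{d}_o(\Xbbf)(1+o(1))$ with high probability for the chosen diverging $k_n=o(\log n/\log\log n)$. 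Consequently there is a \emph{deterministic} sequence $\bar{T}_n=b\,\bar{d}_o(\Xbbf)(1+o(1))$ with $\P(T_n>\bar{T}_n)\to0$, and below we may replace $T_n$ by $\bar{T}_n$. (If $\bar{d}_o(\Xbbf)=0$ a separate, easier estimate applies.)

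Now fix $\epsilon>0$ and let $\Ec$ be the event that some $u^n\in\Xc_b^n$ satisfies $c_{\wb}(u^n)\le\bar{T}_n$ and $\tfrac1n\|u^n-X^n\|^2\ge\epsilon$ and $\|Au^n-AX^n\|^2\le\frac{n^2}{\lambda}\bar{T}_n$. On $\{T_n\le\bar{T}_n\}\cap\Ec^c$ the displayed bounds force $\tfrac1n\|\Xh^n-X^n\|^2<\epsilon$, so it suffices to show $\P(\Ec)\to0$. Condition on $X^n$; then the set $S$ of $u^n\in\Xc_b^n$ with $c_{\wb}(u^n)\le\bar{T}_n$ and $\tfrac1n\|u^n-X^n\|^2\ge\epsilon$ is fixed (it does not involve $A$), and since $c_{\wb}(u^n)\ge\hat{H}_k(u^n)$ depends only on the $(k+1)$-type of $u^n$, the method of types gives $|S|\le 2^{o(n)}\cdot 2^{n\bar{T}_n(1+o(1))}$ (the $2^{o(n)}$ counts the $(n+1)^{|\Xc_b|^{k+1}}=2^{o(n)}$ types, using $|\Xc_b|^{k+1}=n^{o(1)}$ when $k_n=o(\log n/\log\log n)$). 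For each fixed $u^n\in S$, $\|Au^n-AX^n\|^2/\|u^n-X^n\|^2$ is $\chi^2$ with $m$ degrees of freedom, so its lower-tail bound yields $\P(\|Au^n-AX^n\|^2\le\frac{n^2}{\lambda}\bar{T}_n\mid X^n)\le\exp(-\tfrac m2\ln(\lambda/b)(1+o(1)))$, where $\ln(\lambda/b)=2r\ln\log n(1+o(1))$. A union bound over $S$ then gives
\[
\P(\Ec)\ \le\ 2^{o(n)}\,\exp\!\Big(n\,\bar{d}_o(\Xbbf)\,r\ln\log n\,(1+o(1))-\tfrac m2\,2r\ln\log n\,(1+o(1))\Big),
\]
whose exponent is $\le-\d\,\bar{d}_o(\Xbbf)\,n\,r\ln\log n\,(1+o(1))\to-\infty$ because $m\ge(1+\d)\bar{d}_o(\Xbbf)n$. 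Letting $\epsilon\downarrow0$ proves $\frac1{\sqrt n}\|X^n-\Xh^n\|\stackrel{\rm P}{\to}0$.

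The conceptual part, Step~1, is short once one observes that $\hat{H}_k$ is $1$-homogeneous in $\hat{p}_{k+1}$: it is the tightness $c_{\wb}([X^n]_b)=\hat{H}_k([X^n]_b)$ that lets $[X^n]_b$ be the comparison point at no cost, while the global bound $\hat{H}_k\le c_{\wb}$ controls the size of the feasible set through the method of types. The real work, exactly as in the proof of Theorem~\ref{thm:LMEP}, is the uniform bookkeeping in the last two paragraphs: one must check that the prescribed joint scalings of $k_n,b_n,\lambda_n,m_n$ simultaneously keep the number of $(k+1)$-types sub-exponential in $n$, keep the empirical-entropy estimate $\hat{H}_{k_n}([X^n]_{b_n})\approx b_n\,\bar{d}_o(\Xbbf)$ valid, and make the Gaussian lower-tail exponent ($\approx\frac m2\ln(\lambda/b)\approx m\,r\ln\log n$) strictly exceed the counting exponent ($\approx n\,b\,\bar{d}_o(\Xbbf)\ln2\approx n\,\bar{d}_o(\Xbbf)\,r\ln\log n$); that last comparison is exactly where the hypothesis $m\ge(1+\d)\bar{d}_o(\Xbbf)n$ is consumed.
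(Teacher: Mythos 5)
Your proof is correct and follows essentially the same route as the paper's: the heart of both arguments is the concavity (tangent-plane) inequality for $\hat{H}_k$ combined with the Euler/tightness identity $c_{\wb}([X^n]_b)=\hat{H}_k([X^n]_b)$, which yields the cost comparison $\hat{H}_k(\Xh^n)+\tfrac{\lambda}{n^2}\|A\Xh^n-Y^m\|^2\le \hat{H}_k([X^n]_b)+\tfrac{\lambda}{n^2}\|A[X^n]_b-Y^m\|^2$ and thereby reduces everything to the argument behind Theorem \ref{thm:LMEP}. The only cosmetic difference is that you carry out the ensuing concentration-and-counting step explicitly via the method of types, whereas the paper defers to the proof of Theorem 8 in \cite{JalaliP:17-IT}, which bounds the cardinality of the low-complexity set through the Lempel--Ziv codelength instead.
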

The proof is presented in Section \ref{proof:AMEP-input}.

This result implies that if in addition to the measurements the decoder had access to the empirical distribution of the input signal,  it could recover the input signal asymptotically almost losslessly roughly from slightly more than $n\bar{d}_o(\Xbbf)$ random linear measurements using the simplified  optimization described \eqref{eq:A-MEP}.

Theorems \ref{lemma:AMEP} and \ref{thm:AMEP-input-signal} provide  two different ways to set the coefficients $\wb$ in the A-MEP optimization, described in \eqref{eq:A-MEP}, so that asymptotically it has the same performance as the L-MEP optimization. The prescribed   set of weights  are   derived either  from  the empirical distribution of the quantized  input sequence $X^n$, or from  the empirical distribution of  the quantized version of its reconstruction $\Xh^n_{\rm LMEP}$.  Therefore, the resulting algorithms for the specified set of weights  are not universal compressed sensing algorithms. However, for $\Psi^*$-mixing  stationary processes, using Theorem \ref{thm:empirical-conv}, the empirical probabilities  of the quantized  input sequence $X^n$ is expected to be  close to their expected values, given that $k$ and $b$ grow slowing enough with the blocklength $n$.  This raises the following question: 
\begin{question}
Can we replace the input-dependent weights with a set of weights that only depend on the source distribution and still have the same 
performance guarantee? 
\end{question}
The answer to the above question is in fact affirmative. Deriving the weights using the input distribution  instead of using the $(k+1)$ order empirical distribution of $[X^n]_b$ or  $[\Xh^n_{\rm LMEP}]_b$ yields  the Q-MAP optimization, which is a Bayesian compressed sensing recovery algorithm. More precisely, given stationary process $\Xbbf$, for $a^{k+1}\in\Xc_b^{k+1}$, let 
\begin{align}
w_{a^{k+1}}=-\log \P\left([X_{k+1}]_b=a_{k+1}|[X^k]_b=a^k\right).\label{eq:qmap-weights}
\end{align}
Note that this new set of  weights are not random variables and do not depend on any empirical distribution. The following theorem proves that the  Q-MAP  optimization,  for any fixed $k$, under some mild technical condition, asymptotically,  is still able to recover the input vector $X^n$ from measurements $Y^m=AX^n$,  at any sampling rate slightly larger than $d_k(\Xbbf)$. 
\begin{theorem}[Theorem 5.1 in \cite{jalali2016qmap}]\label{thm:mainresult-Q-MAP-L}
Consider  a $\Psi^*$-mixing stationery  process $\Xbbf$.  Let $Y^m=AX^n$, where the entries of $A$ are i.i.d. $\Nc(0,1)$. Choose  $k$, $r>1$ and $\d>0$, and let $b=b_n=\lceil r\log\log n\rceil$,  $\l=\l_n=(\log n)^{2r}$ and  $m=m_n\geq (1+\delta)n\bar{d}_k(\Xbbf)$.  Assume that  there exists a constant  $f_{k+1}>0$, such that  for  any quantization level $b$, and any  $u^{k+1}\in\Xc_b^{k+1}$ with  $\P([X^{k+1}]_b=u^{k+1})\neq 0$, 
\begin{align}
\P([X^{k+1}]_b=u^{k+1})\geq {f_{k+1} |\Xc_b|^{-(k+1)}}.\label{eq:cond-f-k}
\end{align}
 Further, assume that $\Xh^n_{\rm QMAP}$ denotes the solution of  \eqref{eq:A-MEP}, where the coefficients are computed according to \eqref{eq:qmap-weights}. Then, for any $\e>0$,
 \begin{align*}
\lim_{n\to\infty} \P\left( { 1\over \sqrt{n}}\|X^n-\Xh^n_{\rm QMAP}\|_2>\e\right)=0.
  \end{align*}
\end{theorem}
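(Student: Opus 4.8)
The plan is a two-sided argument on the optimal value of the Q-MAP objective, in the spirit of the L-MEP analysis in \cite{JalaliP:17-IT} but tailored to the \emph{cross-entropy} cost of Q-MAP. (One could alternatively deduce this from Theorems \ref{thm:AMEP-input-signal} and \ref{thm:empirical-conv} by a perturbation argument showing the Q-MAP weights are uniformly close to the $[X^n]_b$-empirical weights, but the direct route below is cleaner.) Abbreviate $\Xh^n=\Xh^n_{\rm QMAP}$, $b=b_n$, $\lambda=\lambda_n$, let $c_{\wb}$ be the cost in \eqref{eq:A-MEP} with the weights \eqref{eq:qmap-weights}, so that $c_{\wb}(u^n)=-\sum_{a^{k+1}\in\Xc_b^{k+1}}\hat{p}_{k+1}(a^{k+1}|u^n)\log\P\big([X_{k+1}]_b=a_{k+1}\,\big|\,[X^k]_b=a^k\big)$, and fix a small constant $\gamma>0$ to be pinned down later. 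Two elementary facts about $c_{\wb}$ drive everything: it is nonnegative, and $c_{\wb}(u^n)-\hat{H}_k(u^n)=D\big(\hat{p}_{k+1}(\cdot|u^n)\,\big\|\,q(\cdot|u^n)\big)\ge 0$, where $q(a^{k+1}|u^n):=\hat{p}_k(a^k|u^n)\,\P([X_{k+1}]_b=a_{k+1}|[X^k]_b=a^k)$ is a probability vector on $\Xc_b^{k+1}$; hence $c_{\wb}(u^n)\ge\hat{H}_k(u^n)$. The first move is to bound the optimal value by evaluating the objective at the feasible point $[X^n]_b$. Since $Y^m=AX^n$ and $[X^n]_b$ differs from $X^n$ by at most $2^{-b}$ coordinatewise, the measurement term equals $\tfrac{\lambda}{n^2}\|A([X^n]_b-X^n)\|_2^2\le\tfrac{\lambda}{n^2}\|A\|_{\mathrm{op}}^2\,n\,2^{-2b}$, which is $O(1)$ because $\|A\|_{\mathrm{op}}\le 3\sqrt{n}$ with probability $1-e^{-\Omega(n)}$ for i.i.d.\ $\N(0,1)$ entries and $\lambda\,2^{-2b}\le(\log n)^{2r}(\log n)^{-2r}=1$. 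For the complexity term, Theorem \ref{thm:empirical-conv} (applied at order $k+1$; it still applies with $b=b_n$ since $|\Xc_b|$ grows only polylogarithmically, so its bound still vanishes) gives that, with high probability, the $(k+1)$-block frequencies of $[X^n]_b$ are within $\ell_1$-distance $\epsilon$ of $\P([X^{k+1}]_b=\cdot)$ for any prescribed fixed $\epsilon>0$; assumption \eqref{eq:cond-f-k} forces every weight to satisfy $w_{a^{k+1}}\le(k+1)\log|\Xc_b|+\log(1/f_{k+1})=O(kb)$, so $c_{\wb}(\cdot)$ is $O(kb)$-Lipschitz in that $\ell_1$-distance, whence $c_{\wb}([X^n]_b)\le H([X_{k+1}]_b|[X^k]_b)+\gamma b$ with high probability for $\epsilon$ small. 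Finally $b_n\to\infty$ and $\bar{d}_k=\limsup_{b}H([X_{k+1}]_b|[X^k]_b)/b$ give $H([X_{k+1}]_b|[X^k]_b)\le(\bar{d}_k+\gamma)b$ for $n$ large. Collecting and relabelling $\gamma$, the optimal Q-MAP value is at most $(\bar{d}_k+\gamma)b$ with high probability.

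Since $\Xh^n$ attains this value and $c_{\wb}\ge 0$, we get $\|A(\Xh^n-X^n)\|_2^2=\|A\Xh^n-Y^m\|_2^2\le C_1 n^2 b/\lambda$; and since $c_{\wb}\ge\hat{H}_k$, we get $\hat{H}_k(\Xh^n)\le(\bar{d}_k+\gamma)b$, so $\Xh^n$ lies in the ``structured'' set $\Scal:=\{u^n\in\Xc_b^n:\hat{H}_k(u^n)\le(\bar{d}_k+\gamma)b\}$ for $n$ large. The method-of-types count gives $\log_2|\Scal|\le n(\bar{d}_k+\gamma)b+|\Xc_b|^{k+1}\log_2(n+1)=n(\bar{d}_k+\gamma)b\,(1+o(1))$, the correction being polylogarithmic in $n$, hence $o(nb)$. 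Put $\tilde{E}^n:=\Xh^n-[X^n]_b$, which lies in the finite set $\Dc:=\{u^n-[X^n]_b:u^n\in\Scal\}$; then $\|A\tilde{E}^n\|_2^2\le 2\|A(\Xh^n-X^n)\|_2^2+2\|A([X^n]_b-X^n)\|_2^2\le C_2 n^2 b/\lambda$ (again using $\lambda 2^{-2b}\le 1$).

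It remains to convert the small $A$-norm of $\tilde{E}^n$ into a small Euclidean norm. Condition on $X^n$, so $\Dc$ becomes a fixed finite set independent of $A$. For fixed $v^n$, $Av^n$ has i.i.d.\ $\N(0,\|v^n\|_2^2)$ coordinates, so $\|Av^n\|_2^2/\|v^n\|_2^2\sim\chi_m^2$ and $\P\big(\|Av^n\|_2^2\le\tfrac{m}{t}\|v^n\|_2^2\big)\le(e/t)^{m/2}$ for $t\ge 1$. Fix $\eta>0$ and set $t_n:=\tfrac{m\eta\lambda}{2C_2 n b}=\Theta\!\big(\eta(\log n)^{2r}/\log\log n\big)$, which diverges and has $\log_2 t_n=2r\log\log n\,(1+o(1))$. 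A union bound over $\Dc$ gives
\[
\P\Big(\exists\,v^n\in\Dc:\ \|v^n\|_2^2\ge n\eta,\ \|Av^n\|_2^2<\tfrac{m}{t_n}\|v^n\|_2^2\Big)\ \le\ |\Scal|\,(e/t_n)^{m/2};
\]
taking $\log_2$ and using $b\le r\log\log n+1$ and $m\ge(1+\delta)\bar{d}_k n$, the right-hand side is at most $n r\log\log n\,[(\bar{d}_k+\gamma)-(1+\delta)\bar{d}_k+o(1)]$, which tends to $-\infty$ once $\gamma<\delta\bar{d}_k/2$ (we may assume $\bar{d}_k>0$, the case of interest). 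On the complement of this event, intersected with the high-probability events above, $\|\tilde{E}^n\|_2^2\ge n\eta$ would force $\|A\tilde{E}^n\|_2^2\ge\tfrac{m}{t_n}n\eta=2C_2 n^2 b/\lambda$, contradicting $\|A\tilde{E}^n\|_2^2\le C_2 n^2 b/\lambda$; hence $\|\tilde{E}^n\|_2^2<n\eta$. Therefore $\tfrac{1}{\sqrt{n}}\|X^n-\Xh^n\|_2\le\tfrac{1}{\sqrt{n}}\|\tilde{E}^n\|_2+\tfrac{1}{\sqrt{n}}\|X^n-[X^n]_b\|_2<\sqrt{\eta}+2^{-b_n}$ with probability tending to $1$; since $\eta>0$ is arbitrary, $\tfrac{1}{\sqrt{n}}\|X^n-\Xh^n\|_2\stackrel{\rm P}{\to}0$, which is the claim.

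\textbf{Main obstacle.} The delicate point is the apparent mismatch in the union bound: $\log_2|\Scal|=\Theta(nb_n)=\Theta(n\log\log n)$ grows \emph{faster} than $m=\Theta(n)$, so one cannot hope (and it is false) that $A$ is an approximate isometry on the error set. The resolution --- and the reason the scalings $\lambda_n=(\log n)^{2r}$ and $b_n\asymp r\log\log n$ are calibrated the way they are --- is that optimality only forces $\|A\tilde{E}^n\|_2^2=O(n^2 b_n/\lambda_n)$, so it suffices to rule out $A$ shrinking a \emph{large} error vector by the super-constant factor $t_n\asymp\lambda_n/b_n$; a single such event has probability $(e/t_n)^{m/2}=2^{-mr\log\log n\,(1+o(1))}$, and since $\log_2 t_n\asymp b_n$, this per-vector strength is proportional to $b_n$ --- exactly matching $\log_2|\Scal|/n$ up to the factor $\bar{d}_k$, with the extra $\delta$ in $m\ge(1+\delta)\bar{d}_k n$ providing the needed slack. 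Lining up these several $\Theta(\log\log n)$-scale quantities with the right constants is the crux; the remaining ingredients --- the Gaussian operator-norm bound, the $\chi^2$ lower-tail estimate, the type count, and Theorem \ref{thm:empirical-conv} --- are routine.
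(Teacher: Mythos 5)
The paper does not actually prove this statement---it is imported as Theorem 5.1 of \cite{jalali2016qmap}---so the fair comparison is with the proofs of the sibling results (Theorems \ref{thm:noisycoeffsbound}--\ref{cor:nodel-mismatch}), which follow exactly the template you use. Your argument is correct: the central identity $c_{\wb}(u^n)=\hat{H}_k(u^n)+\sum_{a^k}\hat{p}_k(a^k|u^n)D_{\rm KL}\bigl(\hat{p}_{k+1}(\cdot|a^k,u^n)\,\|\,\P([X_{k+1}]_b=\cdot\,|\,[X^k]_b=a^k)\bigr)\ge\hat{H}_k(u^n)$ is precisely the computation in \eqref{eq:bound-Xb}, the comparison of the optimal cost with the cost at $[X^n]_b$ matches \eqref{eq:f-what-vs-Xn-b}, and the $\chi^2$ lower-tail union bound over a low-complexity set is the paper's event $\Ec_4$. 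Two genuine (minor) differences: you bound the cardinality of the low-entropy set $\Scal$ by a direct Markov-type count, whereas the paper routes through Lempel--Ziv description lengths and unique decodability; and you make explicit the calibration $t_n\asymp\lambda_n/b_n\to\infty$ (equivalently $\tau_n\to 1$) that the paper leaves implicit as a ``proper choice of $\tau$''---your accounting correctly shows that a fixed $\tau$ would not suffice because $\log_2|\Scal|=\Theta(nb_n)$ exceeds $m=\Theta(n)$, which is indeed the crux. You also correctly identify where hypothesis \eqref{eq:cond-f-k} enters (uniform boundedness of the weights, giving the $O(kb)$-Lipschitz control needed to pass from empirical to true frequencies via Theorem \ref{thm:empirical-conv}). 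The only loose ends are edge cases the paper also glosses over ($\bar{d}_k=0$, and sequences charging atoms of zero probability, which occur with probability zero for $[X^n]_b$).
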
 
\begin{remark}
As described in \cite{jalali2016qmap}, the condition stated in \eqref{eq:cond-f-k}    is a technical condition that is needed in the proof of Theorem  \ref{thm:mainresult-Q-MAP-L}. In \cite{jalali2016qmap}, the authors conjecture that the result holds even for distributions that does  not satisfy this condition. The results of Lemma   \ref{lemma:AMEP} and Theorem \ref{thm:AMEP-input-signal} suggest that this might in fact  be the case .\end{remark}

\section{Robustness to coefficients}\label{sec:robustness}
In the previous section, we characterized three sets  of weights that  enable the A-MEP optimization described in \eqref{eq:A-MEP} to recover the input sequence asymptotically at zero-distortion:
\begin{enumerate}
\item In the first case,  for $a^{k+1}\in\Xc_b^{k+1}$,   we set $w_{a^{k+1}}=\log {\hat{p}_k(a^k|\xh^n_{\rm LMEP})\over  \hat{p}_{k+1}(a^{k+1}|\xh^n_{\rm LMEP})}$, where $\hat{p}_{k+1}(a^{k+1}|\xh^n_{\rm LMEP})$ denotes the $(k+1)$-th order empirical distribution of $\xh^n_{\rm LMEP}$. Theorem \ref{lemma:AMEP} evaluates the performance of the resulting algorithm. 
\item In the second case,   for $a^{k+1}\in\Xc_b^{k+1}$,   we set $w_{a^{k+1}}=\log {\hat{p}_k(a^k|[X^n]_b)\over  \hat{p}_{k+1}(a^{k+1}|[X^n]_b)}.$ In other words, in this case the weights are set based on the $(k+1)$-th order empirical distribution of the quantized input sequence.  Theorem \ref{thm:AMEP-input-signal} evaluates the asymptotic  performance of the resulting algorithm. 
\item Finally, in the third  case,  for $a^{k+1}\in\Xc_b^{k+1}$,   we set $w_{a^{k+1}}=-\log \P\left([X_{k+1}]_b=a_{k+1}|[X^k]_b=a^k\right)$. That is, in this case the weights are computed based on the known distribution of the input. The resulting algorithm is the Q-MAP algorithm whose performance is stated in Theorem \ref{thm:mainresult-Q-MAP-L}.
\end{enumerate}
In summary, based on these three recipes,  to evaluate the weights, one needs to have access to one of the following: the solution of the L-MEP optimization, the quantized input sequence, or the source  distribution. While none of these three options, especially the first two, seem to be very  practical, they inspire the following alternative path for deriving the weights: learn the source distribution from available training data and use it to evaluate the weights. 
This raises the following important questions about the robustness of the A-MEP algorithm to its inputs weights.
\begin{question}\label{Q:2}
How sensitive is the performance of  A-MEP (or Q-MAP) optimization to the set of weights $\{w_{a^{k+1}}: \;a^{k+1}\in\Xc_b^{k+1}\}$? In other words, could we disturb the weights and still, asymptotically,  recover the source  at zero-distortion at the same sampling rate?
\end{question}
\begin{question}\label{Q:3}
Assume that  there is a mismatch between the  input distribution and the distribution used to  derive the weights. How much should  the sampling rate  be increased to compensate for the mismatch?
\end{question}
In this section we answer both of these questions.

Consider a stochastic  process $\Xbbf =(X_i: i=1,2,\ldots)$. For source signal $X^n$, let $\Wb^*=\Wb^*_n(X^n)$ denote the  coefficients  calculated based on  the empirical distribution of the quantized version of the input signal $X^n$. That is, for $a^{k+1}\in\Xc_b^{k+1}$,
\begin{align}
W^*_{a^{k+1}}=- \log \hat{p}_{k+1}(a_{k+1}|a^k),\label{eq:def-W*}
\end{align}
where $\hat{p}_{k+1}=\hat{p}_{k+1}(\cdot|[X^n]_b)$.
The following theorem addresses Question \ref{Q:2} and proves that if instead of $\Wb^*$, the A-MEP decoder  employes  a weight vector $\Wb$, which with high probability is close to $\Wb^*$ , then its asymptotic recovery  performance is not affected. More precisely, using the same sampling rate,  with high probability, A-MEP   is still able to recover the source  from noise-free measurements $Y^m=AX^n$, almost losslessly. 

\begin{theorem}\label{thm:noisycoeffsbound}
 Consider the setup of Theorem \ref{thm:AMEP-input-signal}. Assume that the vector of the coefficients  ${\Wb}={\Wb}_n$ is such that, for any $\e>0$,
  \[
  \P\Big({1\over b}\|{\Wb}_n-\Wb^*_n\|_{\infty}>\e \Big)\to 0,
  \]
  as $n$ grows without bound. Let $\Xh^n= \argmin_{u^n\in\Xc_b^n}[c_{\Wb}(u^n) +{\lambda\over n^2}\|Au^n-Y^m\|^2]$. 
  Then, as $n$ grows to infinity, 
  \[
  {1\over \sqrt{n}}\|X^n-\Xh^n\|_2\stackrel{\P}{\to}0.
  \]
\end{theorem}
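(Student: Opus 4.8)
The plan is to view the $\Wb$-optimization as computing an \emph{approximate} minimizer of the $\Wb^*$-optimization analyzed in Theorem~\ref{thm:AMEP-input-signal}, and then to check that the proof of that theorem tolerates an additive perturbation of the objective of size $o(b)$. The starting point is an elementary uniform bound: since $\hat{p}_{k+1}(\cdot\,|u^n)$ is a probability mass function, for every $u^n\in\Xc_b^n$
\[
\big|c_{\Wb}(u^n)-c_{\Wb^*}(u^n)\big|
=\Big|\sum_{a^{k+1}\in\Xc_b^{k+1}}\big(W_{a^{k+1}}-W^*_{a^{k+1}}\big)\hat{p}_{k+1}(a^{k+1}|u^n)\Big|
\le\|\Wb-\Wb^*\|_{\infty}=:\eta_n ,
\]
so the $\Wb$-objective and the $\Wb^*$-objective differ uniformly over the discrete search space by at most $\eta_n$. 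Fix a small $\epsilon_0>0$; by hypothesis $\P(\eta_n\le\epsilon_0 b)\to1$, and we work on the event $\{\eta_n\le\epsilon_0 b\}$ together with the standard good event $\{\|A\|_{\rm op}=O(\sqrt n)\}$.

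Combining the displayed bound with optimality of $\Xh^n$ for the $\Wb$-objective, with the identity $c_{\Wb^*}([X^n]_b)=\hat{H}_k([X^n]_b)$ used in the proof of Theorem~\ref{thm:AMEP-input-signal}, and with $Y^m=AX^n$, gives on this event
\[
c_{\Wb^*}(\Xh^n)+{\lambda\over n^2}\|A\Xh^n-Y^m\|^2
\;\le\;\hat{H}_k([X^n]_b)+{\lambda\over n^2}\|A([X^n]_b-X^n)\|^2+2\epsilon_0 b ,
\]
where the middle term is $O(\lambda 2^{-2b})=O(1)$ because $\lambda\asymp 2^{2b}$. Since $c_{\Wb^*}\ge\hat{H}_k\ge0$ pointwise, this simultaneously yields $\hat{H}_k(\Xh^n)\le\hat{H}_k([X^n]_b)+O(1)+2\epsilon_0 b$ and $\|A\Xh^n-Y^m\|^2=O(n^2 b/\lambda)$, whence $\|A(\Xh^n-[X^n]_b)\|=O(n\sqrt{b/\lambda})+O(n2^{-b})$. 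These are exactly the two estimates the proof of Theorem~\ref{thm:AMEP-input-signal} relies on, with the extra $2\epsilon_0 b$ absorbed into the $O(b)$.

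It remains to re-run that proof verbatim with these estimates. The set of candidate reconstructions the union bound over the Gaussian matrix must eliminate is now $\{u^n\in\Xc_b^n:\hat{H}_k(u^n)\le\hat{H}_k([X^n]_b)+O(1)+2\epsilon_0 b,\ \|u^n-X^n\|_2>\epsilon\sqrt n\}$, which by the method of types has cardinality at most $2^{n\hat{H}_k([X^n]_b)+2\epsilon_0 nb+o(nb)}$. Since, as established within the proof of Theorem~\ref{thm:AMEP-input-signal} via Theorem~\ref{thm:empirical-conv}, $\hat{H}_k([X^n]_b)\le(\bar{d}_o(\Xbbf)+o(1))b$ with probability tending to $1$, and since $m\ge(1+\d)\bar{d}_o(\Xbbf)n$, feeding the same large-deviations estimate for $\|Av^n\|$ as in that proof shows that the exponent governing failure of the union bound is at most $\big(-\d\,\bar{d}_o(\Xbbf)+2\epsilon_0+o(1)\big)nb$, which tends to $-\infty$ as soon as $\epsilon_0<\tfrac12\d\,\bar{d}_o(\Xbbf)$ --- a choice available because the hypothesis supplies $\{\eta_n\le\epsilon_0 b\}$ for \emph{every} $\epsilon_0>0$. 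On the resulting good event every $u^n$ in the candidate set has $\|A(u^n-[X^n]_b)\|$ above a threshold strictly exceeding the $O(n\sqrt{b/\lambda})+O(n2^{-b})$ value attained by $\Xh^n$; hence $\Xh^n$ is not in the set, i.e., ${1\over\sqrt n}\|X^n-\Xh^n\|_2\le\epsilon$, and as $\epsilon>0$ was arbitrary, ${1\over\sqrt n}\|X^n-\Xh^n\|_2\stackrel{\P}{\to}0$.

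The only substantive work is the bookkeeping in the previous paragraph --- verifying that the $o(b)$ perturbation is harmless everywhere it propagates. It is, because every ``budget'' in the proof of Theorem~\ref{thm:AMEP-input-signal} lives on the scale of $b$ (the conditional empirical entropy, and hence the objective, scales like $b$, and $\lambda$ is calibrated so that $\lambda 2^{-2b}=O(1)$), so a perturbation that is $o(b)$ is swallowed by the fixed sampling-rate margin $\d>0$ once $\epsilon_0$ is chosen small. I also note that $\Wb$ is random and need not be nonnegative, but the argument only uses $\sum_{a^{k+1}}\hat{p}_{k+1}(a^{k+1}|u^n)=1$ and $c_{\Wb^*}\ge\hat{H}_k\ge0$, so nonnegativity of $\Wb$ is never needed and the randomness is handled by conditioning on $\{\eta_n\le\epsilon_0 b\}$.
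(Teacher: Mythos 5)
Your proposal is correct and follows essentially the same route as the paper: a uniform $\ell_\infty$ perturbation bound on the linear cost (the paper's Lemma~\ref{lemma:w-what-cost-diff}), the identity $c_{\Wb^*}([X^n]_b)=\hat{H}_k([X^n]_b)$ together with $c_{\Wb^*}(u^n)\ge\hat{H}_k(u^n)$ via the conditional KL decomposition, and then absorption of the $o(b)$ slack into the union-bound argument inherited from the L-MEP analysis. The only cosmetic difference is that you count the candidate set by the method of types where the paper routes the count through the Lempel--Ziv codelength; both yield the same $2^{nb(\bar{d}_o(\Xbbf)+O(\e))}$ bound.
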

The proof is presented in Section \ref{app:A}.

Theorem \ref{thm:noisycoeffsbound} proves that if the distance between the weights used by the A-MEP algorithm, described in \eqref{eq:A-MEP}, and the weights derived from $[X^n]_b$ is small enough, the performance of A-MEP, asymptotically, does not change. In practice, typically,   the distribution of the source is learned from some  available datasets. Therefore, it is important to also connect the performance of the algorithm to the distance between the learned distribution and the true underlying distribution.  The next theorem directly looks  at the distance between the original distribution of the source and its learned version and shows that if this distance is small enough the asymptotic performance does not change.

\begin{theorem}\label{thm:noisycoeffsbound-Dkl}
 Consider the setup of Theorem \ref{thm:AMEP-input-signal}. Assume that the vector of the coefficients  ${\Wb}={\Wb}_n$ is such that for every $n$, there exists a probability distribution $q_{k+1}$ over $\Xc_b^{k+1}$ such that for every   $a^{k+1}\in\Xc_b^k$,
 \[
W_{a^{k+1}}=- \log q_{k+1}(a_{k+1}|a^k),
\]
and 
  \[
  \P\Big({1\over b}D_{\rm KL}(\hat{p}_{k+1}(\cdot| [X^n]_b),q_{k+1}))>\e \Big)\to 0,
  \]
  as $n$ grows without bound. Let  $\Xh^n= \argmin_{u^n\in\Xc_b^n}[c_{\Wb}(u^n) +{\lambda\over n^2}\|Au^n-Y^m\|^2].$
  Then,
  \[
  {1\over \sqrt{n}}\|X^n-\Xh^n\|_2\stackrel{\P}{\to}0.
  \]

\end{theorem}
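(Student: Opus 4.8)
The plan is to reduce the statement to the core estimate behind Theorem~\ref{thm:AMEP-input-signal}: I will show that the minimizer $\Xh^n$ has a vanishing normalized measurement residual and a $k$-th order conditional empirical entropy only $o(b)$ larger than that of $[X^n]_b$, after which the rest of the argument is word-for-word the proof of Theorem~\ref{thm:AMEP-input-signal}. Note that one cannot simply invoke Theorem~\ref{thm:noisycoeffsbound}: the hypothesis $\frac1b D_{\rm KL}(\hat p_{k+1}(\cdot|[X^n]_b),q_{k+1})\to0$ does \emph{not} imply $\frac1b\|\Wb_n-\Wb^*_n\|_\infty\to0$, since a pattern $a^{k+1}$ of empirical frequency of order $1/n$ contributes negligibly to the relative entropy while its weight $-\log\hat p_{k+1}(a_{k+1}|a^k)$ can be as large as $\log n$. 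It is the structure of the cost $c_{\Wb}$ that lets the weaker hypothesis suffice.

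The first step is a decomposition of the cost. With $W_{a^{k+1}}=-\log q_{k+1}(a_{k+1}|a^k)$ and the identity $\hat H_k(u^n)=-\sum_{a^{k+1}}\hat p_{k+1}(a^{k+1}|u^n)\log\hat p_{k+1}(a_{k+1}|a^k;u^n)$, one has for every $u^n\in\Xc_b^n$
\[
c_{\Wb}(u^n)=\hat H_k(u^n)+\sum_{a^{k+1}}\hat p_{k+1}(a^{k+1}|u^n)\log\frac{\hat p_{k+1}(a_{k+1}|a^k;u^n)}{q_{k+1}(a_{k+1}|a^k)}\ \ge\ \hat H_k(u^n),
\]
where the middle term is the conditional relative entropy of $\hat p_{k+1}(\cdot|u^n)$ from $q_{k+1}$ given the first $k$ coordinates (and both sides equal $+\infty$ when $q_{k+1}$ vanishes on the support of $\hat p_{k+1}(\cdot|u^n)$). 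By the chain rule for relative entropy this middle term is at most $D_{\rm KL}(\hat p_{k+1}(\cdot|u^n),q_{k+1})$, so at $u^n=[X^n]_b$ it is $o(b)$ with probability tending to one, by hypothesis. Moreover, exactly as in the proof of Theorem~\ref{thm:LMEP}, Theorem~\ref{thm:empirical-conv} and the slow growth of $k_n,b_n$ (with $H(\lfloor X_1\rfloor)<\infty$) give $\hat H_k([X^n]_b)\le b(\bar{d}_o(\Xbbf)+\epsilon)$ with probability tending to one, for every fixed $\epsilon>0$. Hence $c_{\Wb}([X^n]_b)\le b(\bar{d}_o(\Xbbf)+\epsilon)+o(b)<\infty$ w.h.p.

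Using feasibility of $[X^n]_b$ in the optimization~\eqref{eq:A-MEP}, $c_{\Wb}(\Xh^n)+{\lambda\over n^2}\|A\Xh^n-Y^m\|^2\le c_{\Wb}([X^n]_b)+{\lambda\over n^2}\|A([X^n]_b-X^n)\|^2$; the quantization term on the right is $O(\lambda 2^{-2b})=O(1)$ for the prescribed $\lambda=(\log n)^{2r}$, $b=\lfloor r\log\log n\rfloor$, using $\|[X^n]_b-X^n\|_2^2\le n2^{-2b}$ and $\|A\|_{\rm op}^2=O(n)$ w.h.p.\ for i.i.d.\ $\Nc(0,1)$ entries, exactly as in the proof of Theorem~\ref{thm:LMEP}. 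Since $c_{\Wb}(\Xh^n)\ge\hat H_k(\Xh^n)\ge0$, it follows that, with probability tending to one,
\[
{1\over n^2}\|A\Xh^n-Y^m\|^2\ \le\ \lambda^{-1}\big(b(\bar{d}_o(\Xbbf)+\epsilon)+o(b)\big)\ \longrightarrow\ 0,\qquad \hat H_k(\Xh^n)\ \le\ b\big(\bar{d}_o(\Xbbf)+2\epsilon\big)
\]
for all large $n$.

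These two properties are precisely the input to the counting/concentration argument in the proof of Theorem~\ref{thm:AMEP-input-signal}: the number of $u^n\in\Xc_b^n$ with $\hat H_k(u^n)\le b(\bar{d}_o(\Xbbf)+2\epsilon)$ is, by the method of types, at most $2^{nb(\bar{d}_o(\Xbbf)+2\epsilon)(1+o(1))}$, while for each fixed such $u^n$ with $\|u^n-X^n\|_2\ge\epsilon\sqrt n$ a Gaussian small-ball bound, combined with the residual estimate above and $\lambda=(\log n)^{2r}$, has probability at most $2^{-mb(1+o(1))}$; a union bound then shows that with probability tending to one no such $u^n$ can be the minimizer, so $\frac1{\sqrt n}\|X^n-\Xh^n\|_2\le\epsilon$ w.h.p.\ whenever $m\ge(1+\delta)n\bar{d}_o(\Xbbf)$ and $\epsilon$ is chosen small relative to $\delta\bar{d}_o(\Xbbf)$; letting $\epsilon\downarrow0$ completes the proof. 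The only genuinely new point — everything else being a transcription of the proofs of Theorems~\ref{thm:LMEP} and~\ref{thm:AMEP-input-signal} — and the place I expect to be the main obstacle, is the first display: recognizing that the gap between $c_{\Wb}$ and the conditional empirical entropy is a \emph{conditional} relative entropy, hence dominated via the chain rule by the \emph{joint} relative entropy that the hypothesis controls, and then checking that the resulting $o(b)$ slack is absorbed into the margin $\delta$ in the sampling rate.
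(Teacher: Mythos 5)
Your proposal is correct and follows essentially the same route as the paper's proof: the key identity $c_{\Wb}(u^n)=\hat H_k(u^n)+\sum_{a^k}\hat p_k(a^k|u^n)D_{\rm KL}\big(\hat p_{k+1}(\cdot|a^k;u^n)\,\|\,q_{k+1}(\cdot|a^k)\big)$, lower-bounded by $\hat H_k(\Xh^n)$ at the minimizer and upper-bounded at $[X^n]_b$ by $\hat H_k([X^n]_b)+D_{\rm KL}(\hat p_{k+1}(\cdot|[X^n]_b),q_{k+1})\le \hat H_k([X^n]_b)+\e b$ via the chain rule, is exactly the computation in the paper's equations \eqref{eq:bound-Xh} and \eqref{eq:bound-Xb}, after which both arguments hand off to the proof of Theorem \ref{thm:LMEP}. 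Your opening remark that the KL hypothesis does not imply the $\ell_\infty$ hypothesis of Theorem \ref{thm:noisycoeffsbound} is a correct and worthwhile observation, consistent with the paper treating the two theorems separately.
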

The proof is presented in Section \ref{app:B}.

We next turn to Question \ref{Q:3} and show that if there is a bounded mismatch between the weights $\Wb^*$ and the ones used by the A-MEP algorithm, $\Wb$, by increasing the sampling rate with a constant proportional to the $\ell_{\infty}$ distance between $\Wb$ and $\Wb^*$,  asymptotically, A-MEP still recovers the source vector  almost losslessly. 

\begin{theorem}\label{cor:nodel-mismatch}
 Consider the setup of Theorem \ref{thm:AMEP-input-signal}. Assume that the vector of the coefficients  ${\Wb}={\Wb}_n$ is such that,   as $n$ grows without bound,
   \[
  \P\Big({1\over b}\|{\Wb}_n-\Wb^*_n\|_{\infty}>\e_w \Big)\to 0,
  \]
for some $\e_w>0$. Assume that
   \[
  m=m_n\geq (1+\d)(\bar{d}_o(\Xbbf)+3\e_w)n,
  \]
  where $\d>0$ is a free parameter. Let $\Xh^n= \argmin_{u^n\in\Xc_b^n}[c_{\Wb}(u^n) +{\lambda\over n^2}\|Au^n-Y^m\|^2]$.
  Then,
  \[
  {1\over \sqrt{n}}\|X^n-\Xh^n\|_2\stackrel{\P}{\to}0.
  \]
\end{theorem}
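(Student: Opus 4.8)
The plan is to rerun the recovery analysis behind Theorems~\ref{thm:AMEP-input-signal} and~\ref{thm:noisycoeffsbound}, carefully tracking how the \emph{non}-vanishing mismatch~$\e_w$ propagates through the cost thresholds into the count of low-cost candidate sequences, and hence into the required sampling rate. Write $J_{\Wb}(u^n)=c_{\Wb}(u^n)+\frac{\lambda}{n^2}\|Au^n-Y^m\|^2$, and let $\Wb^*=\Wb^*_n$ be the weights of~\eqref{eq:def-W*}. Two elementary facts drive the argument: (i) $|c_{\Wb}(u^n)-c_{\Wb^*}(u^n)|\le\|\Wb-\Wb^*\|_\infty$ for every $u^n\in\Xc_b^n$, since $\sum_{a^{k+1}}\hat p_{k+1}(a^{k+1}|u^n)=1$; and (ii) the cross-entropy identity $c_{\Wb^*}(u^n)=\hat H_k(u^n)+\sum_{a^k}\hat p_k(a^k|u^n)\,D_{\rm KL}\big(\hat p_{k+1}(\cdot|a^k,u^n)\,\big\|\,\hat p_{k+1}(\cdot|a^k,[X^n]_b)\big)$, whose last term is nonnegative, so $c_{\Wb^*}(u^n)\ge\hat H_k(u^n)$, with equality at $u^n=[X^n]_b$ (both sides are read as $+\infty$ when $u^n$ carries a $(k+1)$-pattern absent from $[X^n]_b$).

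First I would condition on the intersection of $\{\|\Wb_n-\Wb^*_n\|_\infty\le b\e_w\}$ (probability $\to1$ by hypothesis) and $\{\hat H_k([X^n]_b)\le b(\bar d_o(\Xbbf)+o(1))\}$ (probability $\to1$ by Theorem~\ref{thm:empirical-conv} together with $\frac1bH([X_{k+1}]_b|[X^k]_b)\to\bar d_{k_n}(\Xbbf)\to\bar d_o(\Xbbf)$ and the prescribed growth of $b_n,k_n$, exactly as in the proof of Theorem~\ref{thm:AMEP-input-signal}). On this event, fact~(i) and the equality in~(ii) at $[X^n]_b$ give $c_{\Wb}([X^n]_b)\le\hat H_k([X^n]_b)+b\e_w\le b(\bar d_o(\Xbbf)+\e_w+o(1))$; moreover $\E\big[\frac{\lambda}{n^2}\|A([X^n]_b-X^n)\|^2\big]\le\frac{\lambda m}{n}4^{-b}=O(\lambda\,4^{-b})=O(1)=o(b)$ (since $|[x]_b-x|\le2^{-b}$ and $m=O(n)$), so by Markov's inequality this term is $o(b)$ with probability $\to1$. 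Hence $J_{\Wb}([X^n]_b)\le b(\bar d_o(\Xbbf)+\e_w+o(1))$ with probability $\to1$, and since $J_{\Wb}(\Xh^n)\le J_{\Wb}([X^n]_b)$ the minimizer satisfies both $\frac{\lambda}{n^2}\|A(\Xh^n-X^n)\|^2\le b(\bar d_o(\Xbbf)+\e_w+o(1))$ and, by~(i) followed by~(ii), $\hat H_k(\Xh^n)\le c_{\Wb^*}(\Xh^n)\le c_{\Wb}(\Xh^n)+b\e_w\le b(\bar d_o(\Xbbf)+2\e_w+o(1))$. In particular $\Xh^n$ lies in the entropy ball $\Sc_n:=\{u^n\in\Xc_b^n:\hat H_k(u^n)\le b(\bar d_o(\Xbbf)+2\e_w+o(1))\}$.

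It then remains to show that, for each fixed $\eta>0$, with probability $\to1$ no $u^n\in\Sc_n$ satisfies both $\|u^n-X^n\|_2>\eta\sqrt n$ and $\frac{\lambda}{n^2}\|A(u^n-X^n)\|^2\le b(\bar d_o(\Xbbf)+\e_w+o(1))$. Because $b_n=\lfloor r\log\log n\rfloor$ and $k_n=o(\log n/\log\log n)$ force $|\Xc_b|^{k_n+1}=n^{o(1)}$, the number of $(k+1)$-types over $\Xc_b$ is $2^{o(n)}$, and the standard type-counting bound gives $|\Sc_n|\le 2^{nb(\bar d_o(\Xbbf)+2\e_w)+o(nb)}$. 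Since $A$ is independent of $X^n$ with i.i.d.\ $\Nc(0,1)$ entries, for each fixed $u^n\in\Sc_n$ the vector $A(u^n-X^n)$ is, conditionally on $X^n$, spherical Gaussian with $\|A(u^n-X^n)\|_2^2\sim\|u^n-X^n\|_2^2\,\chi^2_m$; restricting to $\|u^n-X^n\|_2^2\ge\eta^2n$ and using that the threshold $t_n:=n^2b(\bar d_o(\Xbbf)+\e_w+o(1))/\lambda$ obeys $t_n/(\eta^2nm)\to0$ (using $\lambda=(\log n)^{2r}$, $b=\lfloor r\log\log n\rfloor$, $m=O(n)$), the chi-square lower-tail bound yields $\P\big(\frac{\lambda}{n^2}\|A(u^n-X^n)\|_2^2\le b(\bar d_o(\Xbbf)+\e_w+o(1))\big)\le 2^{-mb(1+o(1))}$. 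A union bound over $\Sc_n$ therefore bounds the desired probability by $2^{b[\,n(\bar d_o(\Xbbf)+2\e_w)+o(n)-m(1+o(1))\,]}$, and with $m\ge(1+\delta)(\bar d_o(\Xbbf)+3\e_w)n$ the bracket is $\le-\e_w n+o(n)\to-\infty$, so this probability vanishes. Combining with the event of the previous paragraph, and then letting $\eta\downarrow0$ (and using $\|X^n-[X^n]_b\|_2=o(\sqrt n)$), gives $\frac{1}{\sqrt n}\|X^n-\Xh^n\|_2\stackrel{\P}{\to}0$.

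The genuinely new point — and the place to be careful — is the combination in the last two paragraphs: the two successive uses of fact~(i) inflate the entropy threshold governing the candidate count from $b\bar d_o(\Xbbf)$ up to $b(\bar d_o(\Xbbf)+2\e_w)$, so one must verify that the enlarged count $2^{nb(\bar d_o(\Xbbf)+2\e_w)+o(nb)}$ is still beaten by the measurement-side factor $2^{-mb(1+o(1))}$; this is precisely why the sampling rate is raised by $3\e_w$ rather than $2\e_w$, leaving a strict $\e_w$ margin even as $\delta\downarrow0$. Everything else — that the $o(\cdot)$ terms (quantization error $\propto4^{-b}$, the concentration rate in Theorem~\ref{thm:empirical-conv}, and the type count $2^{o(n)}$) are negligible at the prescribed scalings of $b_n,k_n,\lambda_n$, and the $+\infty$ convention for $\Wb^*$ on $(k+1)$-patterns missing from $[X^n]_b$ — is inherited essentially verbatim from the proofs of Theorems~\ref{thm:AMEP-input-signal} and~\ref{thm:noisycoeffsbound}.
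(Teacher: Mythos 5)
Your proof is correct and follows essentially the same route as the paper's: bound the A-MEP cost of $\Xh^n$ by that of $[X^n]_b$, convert the weight mismatch into an inflated empirical-entropy threshold for $\Xh^n$, count the low-entropy candidates (you via types, the paper via Lempel--Ziv codelengths), and eliminate the distant candidates with a union bound and the $\chi^2$ lower tail. The one substantive difference is bookkeeping: by comparing the $\Wb$-cost of $\Xh^n$ directly to that of $[X^n]_b$, rather than routing through the auxiliary $\Wb^*$-minimizer $\Xt^n$ and Lemma~\ref{lemma:w-what-cost-diff}, you incur only $2\e_w$ of entropy inflation where the paper incurs $3\e_w$, so your argument in fact establishes the theorem under the weaker requirement $m\ge(1+\d)(\bar{d}_o(\Xbbf)+2\e_w)n$ --- the stated $3\e_w$ is then pure slack, exactly as you observe.
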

The proof is presented in Section \ref{app:C}.

\section{Proofs}\label{sec:proofs}
In this section we provide the proofs of the results mentioned earlier in the paper. 

\subsection{Proof of Theorem \ref{proof:AMEP-input}}\label{proof:AMEP-input}
Since $\Xh^n$ is a minimizer of $c_{\wb}(u^n)+{\lambda\over n^2}\|Au^n-Y^m\|^2$ over all $u^n\in\Xc_b^n$, we have
\begin{align}\label{eq:cost-comapre-1}
  &\sum_{a^{k+1}\in\Xc_b^{k+1}} w_{a^{k+1}} \hat{p}_{k}(a^{k+1}|\Xh^n) +{\lambda\over n^2}\|A\Xh^n-Y^m\|^2\nonumber\\
   &\;\;\leq  \sum_{a^{k+1}\in\Xc_b^{k+1}} w_{a^{k+1}} \hat{p}_{k}(a^{k+1}|[X^n]_b) +{\lambda\over n^2}\|A[X^n]_b-Y^m\|^2\nonumber \\
   &\;\;=\hat{H}_k([X^n]_b)+ {\lambda\over n^2}\|A[X^n]_b-Y^m\|^2,
\end{align}
where the last line follows because, using the weights from \eqref{eq:input-signal}, we have 
\begin{align*}
c_{\wb}(u^n)&= \sum_{a^{k+1}\in\Xc_b^{k+1}} w_{a^{k+1}} \hat{p}_{k}(a^{k+1}|[X^n]_b)  \nonumber\\
 &= \sum_{a^{k+1}\in\Xc_b^{k+1}} \hat{p}_{k}(a^{k+1}|[X^n]_b)\log {\hat{p}_k(a^k|[X^n]_b)\over  \hat{p}_{k+1}(a^{k+1}|[X^n]_b)} \\
 &=\hat{H}_k([X^n]_b).
\end{align*}
On the other hand, by the concavity of the empirical entropy function \cite{JalaliM:12}, it follows that
\begin{align}\label{eq:concavity-lemma2}
  \hat{H}_k(\Xh^n) &\leq \hat{H}_k([X^n]_b)+ \sum_{a^{k+1}\in\Xc_b^{k+1}} w_{a^{k+1}} \Big(\hat{p}_{k}(a^{k+1}|\Xh^n)-\hat{p}_{k}(a^{k+1}|[X^n]_b)\Big)\nonumber\\
  &=  \sum_{a^{k+1}\in\Xc_b^{k+1}} w_{a^{k+1}} \hat{p}_{k}(a^{k+1}|\Xh^n).
\end{align}
Adding ${\lambda\over n^2}\|A\Xh^n-Y^m\|^2$ to the both sides of \eqref{eq:concavity-lemma2}, we derive
\begin{align}\label{eq:cost-compare-2}
    \hat{H}_k&(\Xh^n)+{\lambda\over n^2}\|A\Xh^n-Y^m\|^2 \nonumber\\
     &\leq  \sum_{a^{k+1}\in\Xc_b^{k+1}} w_{a^{k+1}} \hat{p}_{k}(a^{k+1}|\Xh^n)+{\lambda\over n^2}\|A\Xh^n-Y^m\|^2.
\end{align}
Therefore, combining \eqref{eq:cost-comapre-1} and \eqref{eq:cost-compare-2}, we have
\begin{align}
    \hat{H}_k(\Xh^n)+{\lambda\over n^2}\|A\Xh^n-Y^m\|^2  &\leq   \hat{H}_k([X^n]_b)+{\lambda\over n^2}\|A[X^n]_b-Y^m\|^2
\end{align}
The rest of the proof follows from the proof of  Theorem 8 in \cite{JalaliP:17-IT}. The reason is that while L-MEP seeks the minimizer of a different cost function ($\hat{H}_k(u^n)+{\lambda\over n^2}\|Au^n-Y^m\|^2$), the proof only uses the fact that the cost of the output $\Xh^n$ is smaller than the cost of the quantized version of the input ($[X^n]_b$).

\subsection{Proof of Theorem \ref{lemma:AMEP}}\label{proof:AMEP}
  For the ease of the notation, let $u^n=\xh^n_{\rm LMEP}$ and $v^n=\xh^n_{\rm AMEP}$. Since $\hat{H}_k$ is a concave function of $\hat{p}_{k+1}$ \cite{JalaliM:12}, and since by assumption the weights $\wb=(w_{a^{k+1}}: a^{k+1}\in\Xc_b^{k+1})$ are computed at $\hat{p}_{k+1}(u^n)$, we have
  \begin{align}\label{eq:concavity-Hk-1}
    \hat{H}_k(v^n) & \leq \hat{H}_k(u^n) + \sum_{a^{k+1}}w_{a^{k+1}} (\hat{p}_{k+1}(a^{k+1}|v^n)-\hat{p}_{k+1}(a^{k+1}|u^n))\nonumber\\
    &=\sum_{a^{k+1}}w_{a^{k+1}} \hat{p}_{k+1}(a^{k+1}|v^n),
  \end{align}
  where the last line follows because    
  \begin{align*}
\sum_{a^{k+1}\in\Xc_b^{k+1}}&w_{a^{k+1}} \hat{p}_{k+1}(a^{k+1}|u^n)\nonumber\\
&=\sum_{a^{k+1}\in\Xc_b^{k+1}} \hat{p}_{k+1}(a^{k+1}|u^n) \log {\hat{p}_k(a^k|u^n)\over  \hat{p}_{k+1}(a^{k+1}|u^n)}\nonumber\\
&=\hat{H}_k(u^n).
  \end{align*}
Adding ${\lambda\over n^2}\|Av^n-y^m\|^2$ to the both sides of \eqref{eq:concavity-Hk-1}, it follows that
   \begin{align}\label{eq:concavity-Hk-2}
    &\hat{H}_k(v^n)+{\lambda\over n^2}\|Av^n-Y^m\|^2 \nonumber\\
    & \leq
     \sum_{a^{k+1}}w_{a^{k+1}} \hat{p}_{k+1}(a^{k+1}|v^n)+{\lambda\over n^2}\|Av^n-y^m\|^2.
  \end{align}
  On the other hand, by our assumption, $v^n$ is a minimizer of \eqref{eq:A-MEP}. Therefore,
  \begin{align}\label{eq:minimizer-AMEP}
   \sum_{a^{k+1}\in\Xc_b^{k+1}}& w_{a^{k+1}} \hat{p}_{k+1}(a^{k+1}|v^n) +{\lambda\over n^2}\|Av^n-y^m\|^2 \nonumber\\
    &\leq  \sum_{a^{k+1}\in\Xc_b^{k+1}} w_{a^{k+1}} \hat{p}_{k+1}(a^{k+1}|u^n) +{\lambda\over n^2}\|Au^n-y^m\|^2 \nonumber\\
    &= \hat{H}_k(u^n)+{\lambda\over n^2}\|Au^n-y^m\|^2.
  \end{align}
  Therefore, combining \eqref{eq:concavity-Hk-2} and \eqref{eq:minimizer-AMEP}, we have
  \[
    \hat{H}_k(v^n)+{\lambda\over n^2}\|Av^n-y^m\|^2\leq
     \hat{H}_k(u^n)+{\lambda\over n^2}\|Au^n-y^m\|^2.
  \]
  But  $u^n$ is a minimizer of \eqref{eq:Lagrangian-MEP}, which implies
  \[
    \hat{H}_k(u^n)+{\lambda\over n^2}\|Au^n-y^m\|^2\leq
     \hat{H}_k(v^n)+{\lambda\over n^2}\|Av^n-y^m\|^2.
  \]
  Therefore, $  \hat{H}_k(u^n)+{\lambda\over n^2}\|Au^n-y^m\|^2=
     \hat{H}_k(v^n)+{\lambda\over n^2}\|Av^n-Y^m\|^2$, and as a result $u^n$ is also a minimizer of \eqref{eq:Lagrangian-MEP}.
\subsection{Proof of  Theorem \ref{thm:noisycoeffsbound}}\label{app:A}
Before stating the proof of Theorem \ref{thm:noisycoeffsbound}, we prove the following  lemma which is used in the proof. 

\begin{lemma}\label{lemma:w-what-cost-diff} 
Given $\wb=(w_{a^{k+1}}: a^{k+1}\in\Xc_b^{k+1})$ and $\hat{\wb}=(\hat{w}_{a^{k+1}}:a^{k+1}\in\Xc_b^{k+1})$, $A\in\mathds{R}^{m\times n}$ and $y^m\in\mathds{R}^m$, define functions $f:\Xc_b^n\to\mathds{R}$ and $\hat{f}:\Xc_b^n\to\mathds{R}$, as
\[
f(u^n)=\sum_{a^{k+1}\in\Xc_b^{k+1}} w_{a^{k+1}} \hat{p}_{k}(a^{k+1}|u^n) +{\lambda\over n^2}\|Au^n-Y^m\|^2\
 \]
and 
\[
\hat{f}(u^n)=\sum_{a^{k+1}\in\Xc_b^{k+1}} \hat{w}_{a^{k+1}} \hat{p}_{k}(a^{k+1}|u^n) +{\lambda\over n^2}\|Au^n-Y^m\|^2,
\]
respectively. Assume that the weights $\wb$ and $\hat{\wb}$ are such that 
\[
\|\wb-\hat{\wb}\|_{\infty}\leq \e,
\]
for some $\e>0$.   Let $\Xh^n\triangleq \argmin f(u^n)$ and $\Xt^n\triangleq \argmin \hat{f}(u^n)$. Then,
  \[
  f(\Xh^n)\leq f(\Xt^n) \leq f(\Xh^n)+2\e.
  \]
\end{lemma}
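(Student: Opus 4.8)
The plan is to control the difference between the two cost functions $f$ and $\hat f$ uniformly over $\Xc_b^n$, and then use the optimality of $\Xh^n$ and $\Xt^n$ to sandwich $f(\Xt^n)$ between $f(\Xh^n)$ and $f(\Xh^n)+2\e$. The left inequality $f(\Xh^n)\le f(\Xt^n)$ is immediate: by definition $\Xh^n$ minimizes $f$, so $f(\Xh^n)\le f(u^n)$ for every $u^n\in\Xc_b^n$, in particular for $u^n=\Xt^n$. So the only real content is the right inequality.

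For the right inequality, the first step is to observe that $f$ and $\hat f$ share the same quadratic measurement-error term ${\lambda\over n^2}\|Au^n-Y^m\|^2$, so their difference is purely the difference of the linear-in-empirical-distribution parts:
\[
f(u^n)-\hat f(u^n)=\sum_{a^{k+1}\in\Xc_b^{k+1}}(w_{a^{k+1}}-\hat w_{a^{k+1}})\,\hat p_k(a^{k+1}|u^n).
\]
Since $\hat p_k(\cdot|u^n)$ is a probability distribution on $\Xc_b^{k+1}$ (its entries are nonnegative and sum to one), this is a convex combination of the coordinates of $\wb-\hat\wb$, hence bounded in absolute value by $\|\wb-\hat\wb\|_\infty\le\e$. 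Thus $|f(u^n)-\hat f(u^n)|\le\e$ for every $u^n\in\Xc_b^n$; in particular $f(u^n)\le\hat f(u^n)+\e$ and $\hat f(u^n)\le f(u^n)+\e$ for all $u^n$.

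The second step chains these bounds through the two minimizations. Starting from $\Xt^n$ and using $f\le\hat f+\e$ at $u^n=\Xt^n$, then the optimality of $\Xt^n$ for $\hat f$ (so $\hat f(\Xt^n)\le\hat f(\Xh^n)$), then $\hat f\le f+\e$ at $u^n=\Xh^n$, we get
\[
f(\Xt^n)\le\hat f(\Xt^n)+\e\le\hat f(\Xh^n)+\e\le f(\Xh^n)+2\e,
\]
which is exactly the claimed right inequality. Combining with the trivial left inequality completes the proof.

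There is no genuine obstacle here — the argument is a standard "close objectives have close optima" estimate. The only point worth stating carefully is that the perturbation term telescopes against a probability vector, so the $\ell_\infty$ bound on the weight mismatch passes directly to a uniform bound on the cost mismatch without any dependence on $n$, $k$, or $b$; everything else is the elementary two-line sandwich through the two argmin definitions.
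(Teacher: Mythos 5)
Your proof is correct and follows essentially the same route as the paper's: both establish the uniform bound $|f(u^n)-\hat f(u^n)|\le\e$ by summing the weight mismatch against the empirical distribution (which sums to one), and then chain this through the optimality of $\Xh^n$ and $\Xt^n$ to obtain the two-sided sandwich. No gaps.
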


\begin{proof}
Since by definition $\Xh^n$ and $\Xt^n$ are the minimizers of $f$ and $\hat{f}$, respectively, we have
\[
f(\Xh^n)\leq f(\Xt^n),
\]
and
\[
\hat{f}(\Xt^n)\leq \hat{f}(\Xh^n).
\]
On other hand, since $\|\wb-\hat{\wb}\|_{\infty}\leq \e$, we have
\begin{align}\label{eq:f-to-fhat}
 f(\Xt^n) &= \sum_{a^{k+1}\in\Xc_b^{k+1}} w_{a^{k+1}} \hat{p}_{k}(a^{k+1}|\Xt^n) +{\lambda\over n^2}\|A\Xt^n-Y^m\|^2\nonumber\\
 &\leq   \sum_{a^{k+1}\in\Xc_b^{k+1}} (\hat{w}_{a^{k+1}}+\e) \hat{p}_{k}(a^{k+1}|\Xt^n) +{\lambda\over n^2}\|A\Xt^n-Y^m\|^2\nonumber\\
 &= \hat{f}(\Xt^n)+ \e \sum_{a^{k+1}\in\Xc_b^{k+1}} \hat{p}_{k}(a^{k+1}|\Xt^n) \nonumber\\
 &= \hat{f}(\Xt^n)+ \e\nonumber\\
 &\leq \hat{f}(\Xh^n)+\e.
\end{align}
Similarly,
\begin{align}\label{eq:fhat-to-f}
 \hat{f}(\Xh^n) &= \sum_{a^{k+1}\in\Xc_b^{k+1}} \hat{w}_{a^{k+1}} \hat{p}_{k}(a^{k+1}|\Xh^n) +{\lambda\over n^2}\|A\Xh^n-Y^m\|^2\nonumber\\
 &\leq   \sum_{a^{k+1}\in\Xc_b^{k+1}} ({w}_{a^{k+1}}+\e) \hat{p}_{k}(a^{k+1}|\Xh^n) +{\lambda\over n^2}\|A\Xh^n-Y^m\|^2\nonumber\\
 &= {f}(\Xh^n)+ \e \sum_{a^{k+1}\in\Xc_b^{k+1}} \hat{p}_{k}(a^{k+1}|\Xh^n) \nonumber\\
 &= {f}(\Xh^n)+ \e.
\end{align}
Therefore, combining \eqref{eq:f-to-fhat} and \eqref{eq:fhat-to-f}, it follows that
\[
f(\Xh^n)\leq f(\Xt^n) \leq f(\Xh^n)+2\e.
\]
\end{proof}
\begin{proof}[Proof of  Theorem \ref{thm:noisycoeffsbound}]
Define event $\Ec_1$ as
\[
\Ec_1=\{{1\over b}\|{\Wb}_n-\Wb^*_n\|_{\infty}\leq \e\}.
\]
Let
\[
\Xt^n=  \argmin_{u^n\in\Xc_b^n}\Big[\sum_{a^{k+1}\in\Xc_b^{k+1}} W^*_{a^{k+1}} \hat{p}_{k}(a^{k+1}|u^n) +{\lambda\over n^2}\|Au^n-Y^m\|^2\Big].
\]
 Conditioned on $\Ec_1$, we have
\begin{align}\label{eq:f-what-vs-Xn-b}
  \sum_{a^{k+1}}&W_{a^{k+1}}\hat{p}_{k+1}(a^{k+1}|\Xh^n) +{\lambda\over n^2}\|Y^m-A\Xh^n\|^2 \nonumber\\
   &\stackrel{(a)}{\leq} 2\e b+ \sum_{a^{k+1}}W^*_{a^{k+1}}\hat{p}_{k+1}(a^{k+1}|\Xt^n) +{\lambda\over n^2}\|Y^m-A\Xt^n\|^2\nonumber\\
   &\stackrel{\rm (b)}{\leq} 2\e b+ \sum_{a^{k+1}}W^*_{a^{k+1}}\hat{p}_{k+1}(a^{k+1}|[X^n]_b) +{\lambda\over n^2}\|Y^m-A[X^n]_b\|^2\nonumber\\
   &= 2\e b + \hat{H}_k([X^n]_b)+ {\lambda\over n^2}\|Y^m-A[X^n]_b\|^2\nonumber\\
   &=  2\e b + \hat{H}_k([X^n]_b)+ {\lambda\over n^2}\|A(X^n-[X^n]_b)\|^2\nonumber\\
   &\stackrel{\rm (c)}{\leq}  2\e b + \hat{H}_k([X^n]_b)+ {\lambda (\sigma_{\max}(A))^2 \over n^2} \|X^n-[X^n]_b\|^2,
\end{align}
where (a) follows from Lemma \ref{lemma:w-what-cost-diff}, (b) holds because $\Xt^n$, by assumption,  is a minimizer of  $\sum_{a^{k+1}\in\Xc_b^{k+1}}W^*_{a^{k+1}} \hat{p}_{k}(a^{k+1}|u^n) +{\lambda\over n^2}\|Au^n-Y^m\|^2$, and (c) holds because $\|A u^n\|\leq \sigma_{\max}(A)\|u^n\|$, for all $u^n\in\mathds{R}^n$. But,
\[
\|X^n-[X^n]_b\|\leq 2^{-b}\sqrt{n}.
\]
Define events $\Ec_2$ and $\Ec_3$ as
\[
\Ec_2=\{\sigma_{\max}(A)<\sqrt{n}+2\sqrt{m}\},
\]
and
\[
\Ec_3=\{{1\over b} \hat{H}_k([X^n]_b)\leq \bar{d}_o(\Xbbf)+\d\},
\]
where $\d>0$. As proved in \cite{CaTa05},
\[
\P(\Ec_2^c)\leq 2^{-m/2}.
\]
Also,  given our choice of parameters $k$ and $b$, for a $\Psi^*$-mixing  process, $\P(\Ec_3^c)$ converges to zero, as $n$ goes to infinity \cite{JalaliP:17-IT}.

On the other hand, conditioned on $\Ec_1\cap \Ec_2\cap \Ec_3$, it follows from \eqref{eq:f-what-vs-Xn-b} that
 \begin{align}\label{eq:f-what-vs-Xn-b-result}
    {1\over b}\sum_{a^{k+1}}&W_{a^{k+1}}\hat{p}_{k+1}(a^{k+1}|\Xh^n) +{\lambda\over n^2b}\|Y^m-A\Xh^n\|^2 \nonumber\\
    &\leq \bar{d}_o(\Xbbf)+\d + 2\e+ {\l(\sqrt{n}+2\sqrt{m})^22^{-2b}\over nb}.
 \end{align}
 But, since $\l=(\log n)^{2r}$ and $b=\lceil r\log\log n\rceil$, we have
 \begin{align*}
 {\l(\sqrt{n}+2\sqrt{m})^22^{-2b}\over nb}&\leq {(\log n)^{2r}(\sqrt{n}+2\sqrt{m})^2\over nb2^{-2r\log\log n}}\\
 &= {(1+2\sqrt{m/n})^2\over b}\\
 &\leq {6\over b}.
 \end{align*}
 Therefore, conditioned on $\Ec_1\cap \Ec_2\cap \Ec_3$, since both terms on the left hand side of \eqref{eq:f-what-vs-Xn-b-result} are positive, we have
\begin{align}\label{eq:cost-Xh-W-vs-do}
{1\over b} \sum_{a^{k+1}}W_{a^{k+1}}\hat{p}_{k+1}(a^{k+1}|\Xh^n) & \leq \bar{d}_o(\Xbbf)+\e',
\end{align}
and
\begin{align}\label{eq:cost-Xh-dist-vs-do}
{\l\over n^2 b} \| Y^m-A\Xh^n\|^2 & \leq \bar{d}_o(\Xbbf)+\e',
\end{align}
where
\[
 \e'\triangleq 2\e +\d+{6\over b},
\]
can be made arbitrary small.

Conditioned on $\Ec_1$, $\|\Wb-\Wb^*\|_{\infty}\leq b\e$. Therefore, conditioned on $\Ec_1$,
\begin{align}\label{eq:min-cost-W-cost-W-hat}
  {1\over b} \sum_{a^{k+1}}W_{a^{k+1}}\hat{p}_{k+1}(a^{k+1}|\Xh^n)  & \geq {1\over b} \sum_{a^{k+1}}(W^*_{a^{k+1}}-\e b)\hat{p}_{k+1}(a^{k+1}|\Xh^n)\nonumber\\
  &= {1\over b} \sum_{a^{k+1}}W^*_{a^{k+1}}\hat{p}_{k+1}(a^{k+1}|\Xh^n)-\e.
\end{align}

 Define  $q_1$ and $q_2$ as the $(k+1)$-th order empirical distributions induced by $[X^n]_b$ and $\Xh^n$, respectively. That is,
 \[q_1=\hat{p}_{k+1}(\cdot|[X^n]_b)
 \]
 and
 \[
 q_2=\hat{p}_{k+1}(\cdot|\Xh^n).
 \]
 Then,
\begin{align}\label{eq:costW-Xh-KL}
 \sum_{a^{k+1}}&W^*_{a^{k+1}}\hat{p}_{k+1}(a^{k+1}|\Xh^n) \nonumber\\
  &=   \sum_{a^{k+1}}W^*_{a^{k+1}}q_2(a^{k+1})\nonumber
      \end{align}
                    \begin{align}
 &=  \sum_{a^{k+1}}q_2(a^{k+1})\log{1\over q_1(a_{k+1}|a^k)}\nonumber\\
 &= \sum_{a^{k+1}}q_2(a^{k+1})\log{q_2(a_{k+1}|a^k)\over q_1(a_{k+1}|a^k)}+
 \sum_{a^{k+1}}q_2(a^{k+1})\log{1\over q_2(a_{k+1}|a^k)} \nonumber\\
 &= \sum_{a^k}q_2(a^k) D_{\rm KL}(q_2(\cdot|a^k)\|q_1(\cdot|a^k)) +
 \hat{H}_{k}(\Xh^n).
\end{align}
Since $\sum_{a^k}q_2(a^k) D_{\rm KL}(q_2(\cdot|a^k)\|q_1(\cdot|a^k))\geq 0$, combining \eqref{eq:cost-Xh-W-vs-do}, \eqref{eq:min-cost-W-cost-W-hat} and \eqref{eq:costW-Xh-KL} yields
\begin{align}\label{eq:final-ub-Hhat-Xh-n}
   {1\over b}\hat{H}_{k}(\Xh^n) & \leq \bar{d}_o(\Xbbf)+\e'+\e.
\end{align}
Note that $\e$ and $\e'$ can be made arbitrary small, and $\P(\Ec_1^c\cup\Ec_2^c\cup\Ec_3^c)$ goes to zero as $n$ grows to infinity. The rest of the proof follows similar to the final steps of the  proof of Theorem 5 in \cite{JalaliP:17-IT}. We include a summary of the remaining steps for completeness. Define set $\Cc_n$ as
\[
 \Cc_n=\{u^n\in\Xc_b^n: {1\over nb}\ell_{\rm LZ}(u^n)\leq \bar{d}_o(\Xbbf)+4\e\}.
\]
Here $\ell_{\rm LZ}(u^n)$ denotes the length of the binary encoding of $u^n$ using the Lempel-Ziv encoder \cite{LZ}. 
Then, choosing $\d$ small enough, for our choice parameters $k=k_n$ and $b=b_n$, conditioned on $\Ec_1\cap\Ec_2\cap \Ec_3$, for all $n$ large enough, $[X^n]_b\in\Cc_n$ and $[\Xh^n]_b\in\Cc_n$.\footnote{For details on the connection between $\hat{H}_k$ and $\ell_{\rm LZ}$, refer to Appendix A in \cite{JalaliP:17-IT}.} But, since $\rm LZ$ is a uniquely decodable code,
\[
|\Cc_n|\leq 2^{nb(\bar{d}_o(\Xbbf)+4\e)}.
\] 
Now define event $\Ec_4$ as
\[
\Ec_4=\{\|A(u^n-X^n)\|\geq \|u^n-X^n\|\sqrt{(1-\tau)m}: \forall u^n\in\Cc_n\},
\]
where $\tau>0$ is a free parameter. For  fixed $u^n\in\Cc_n$ and fixed $X^n=x^n$, using a lemma  in \cite{JalaliM:14-MEP-IT} on the concentration of $\chi^2$ random variables, we have
\[
       \P\big(A(u^n-x^n)\|\leq \|u^n-x^n\|\sqrt{(1-\tau)m}\Big)\leq\mathrm{e}^{\frac{m}{2}(\tau+\ln(1-\tau))}
   \]
Therefore, by the union bound,   for a fixed vector $X^n$,
\[
\P_A(\Ec_3^c)\leq  2^{nb(\bar{d}_o(\Xbbf)+4\e)} \ex^{{m\over 2}(\tau +\ln (1-\tau))},
\]
Here $\P_A$ reflects the fact that $[X^n]_b$ is fixed, and the randomness is in the generation of matrix $A$. Proper choice of parameter $\tau$ combined with the Fubini's Theorem and the Borel Cantelli Lemma proves that $\P_{X^n}(\Ec_3^c)\to 0$, almost surely. Conditioned on $\Ec_1\cap\Ec_2\cap\Ec_3\cap\Ec_4$,  $\Xh^n\in\Cc_n$, and therefore from \eqref{eq:cost-Xh-dist-vs-do},
\begin{align}\label{eq:bound-error-triangle}
 {\lambda(1-\tau)m\over n^2 b} \|\Xh^n-X^n\|^2 &\leq  \bar{d}_o(\Xbbf)+\e',
\end{align}
which, for our set of parameters, proves that, conditioned on $\Ec_1\cap\Ec_2\cap\Ec_3\cap\Ec_4$, ${1\over \sqrt{n}}\|X^n-\Xh^n\|$ can be made arbitrary small. 
\end{proof}

\subsection{Proof of  Theorem \ref{thm:noisycoeffsbound-Dkl}}\label{app:B}

Define event $\Ec_1$ as
\[
\Ec_1=\{{1\over b}D_{\rm KL}(\hat{p}_{k+1}(\cdot| [X_n]_b,q_{k+1})\leq \e\}.
\]
Also define  distributions  $\hat{q}_{k+1}^{(1)}$ and $\hat{q}^{2}_{k+1}$ over $\Xc_b^{k+1}$, as follow: $\hat{q}_{k+1}^{(1)}=\hat{p}_{k+1}(\cdot|[X^n]_b)$
 and $\hat{q}_{k+1}^{(2)}=\hat{p}_{k+1}(\cdot|\Xh^n).$
 Since $\Xh^n$ is the minimizer of $\sum_{a^{k+1}\in\Xc_b^{k+1}} {W}_{a^{k+1}} \hat{p}_{k}(a^{k+1}|u^n) +{\lambda\over n^2}\|Au^n-Y^m\|^2$, we have 
 \begin{align}
&  \sum_{a^{k+1}}W_{a^{k+1}}\hat{p}_{k+1}(a^{k+1}|\Xh^n) +{\lambda\over n^2}\|Y^m-A\Xh^n\|^2\nonumber\\
&\leq \sum_{a^{k+1}}W_{a^{k+1}}\hat{p}_{k+1}(a^{k+1}|[X^n]_b) +{\lambda\over n^2}\|Y^m-A[X^n]_b\|^2.\label{eq:basic-cost}
\end{align}
 On the other hand,  conditioned on $\Ec_1$,  we have
 \begin{align}
  \sum_{a^{k+1}}&W_{a^{k+1}}\hat{p}_{k+1}(a^{k+1}|[X^n]_b) \nonumber\\
  &=- \sum_{a^{k+1}}\hat{q}^{(1)}_{k+1}(a^{k+1})\log q_{k+1}(a_{k+1}|a^k) \nonumber\\  
          &= \sum_{a^{k+1}}\hat{q}^{(1)}_{k+1}(a^{k+1})\log{ \hat{q}^{(1)}_{k+1}(a_{k+1}|a^k)  \over q_{k+1}(a_{k+1}|a^k) \hat{q}^{(1)}_{k+1}(a_{k+1}|a^k)}  \nonumber\\
             &= \hat{H}_k([X^n]_b)+ \sum_{a^{k+1}}\hat{q}^{(1)}_{k+1}(a^{k+1})\log{ \hat{q}^{(1)}_{k+1}(a_{k+1}|a^k)  \over q_{k+1}(a_{k+1}|a^k) }  \nonumber\\ 
                    &= \hat{H}_k([X^n]_b)+ \sum_{a^{k+1}}\hat{q}^{(1)}_{k+1}(a^{k+1})\log{ \hat{q}^{(1)}_{k+1}(a^{k+1})  \over q_{k+1}(a^{k+1}) }\nonumber\\
                    &\;\; \;\; + \sum_{a^{k+1}}\hat{q}^{(1)}_{k+1}(a^{k+1})\log{  q_{k+1}(a^k) \over \hat{q}_{2}(a^k)   }   \nonumber\\
                                     &= \hat{H}_k([X^n]_b)+ D_{\rm KL}(\hat{q}^{(1)}_{k+1}, q_{k+1})  - D_{\rm KL}(\hat{q}^{(1)}_{k}, q_{k})   \nonumber\\ 
                                        &\leq  \hat{H}_k([X^n]_b)+ D_{\rm KL}(\hat{q}^{(1)}_{k+1}, q_{k+1})  \nonumber\\
                                        &\leq  \hat{H}_k([X^n]_b)+ \e b .\label{eq:bound-Xh}
\end{align}
Similarly,
 \begin{align}
  \sum_{a^{k+1}}&W_{a^{k+1}}\hat{p}_{k+1}(a^{k+1}|\Xh^n)\nonumber\\
   &=- \sum_{a^{k+1}}\hat{q}^{(2)}_{k+1}(a^{k+1})\log q_{k+1}(a_{k+1}|a^k) \nonumber\\  
          &= \hat{H}_k(\Xh^n)+ \sum_{a^{k+1}}\hat{q}^{(2)}_{k+1}(a^{k+1})\log{ \hat{q}^{(2)}_{k+1}(a_{k+1}|a^k)  \over q_{k+1}(a_{k+1}|a^k) }  \nonumber\\ 
                    &= \hat{H}_k(\Xh^n)+ \sum_{a^{k}}\hat{q}^{(2)}_{k}(a^{k}) D_{\rm KL}(\hat{q}^{(2)}_{k}(\cdot|a^k), q_{k}(\cdot|a^k))   \nonumber\\ 
                                        &\geq \hat{H}_k(\Xh^n).\label{eq:bound-Xb}
\end{align}
Therefore, combining \eqref{eq:basic-cost}, \eqref{eq:bound-Xh} and \eqref{eq:bound-Xb}, it follows that
  \begin{align}
 \hat{H}_k&(\Xh^n) +{\lambda\over n^2}\|Y^m-A\Xh^n\|^2\nonumber\\
 & \leq  \hat{H}_k([X^n]_b)+ \e b +{\lambda\over n^2}\|Y^m-A[X^n]_b\|^2.\label{eq:B2}
\end{align}
The rest of the proof follows from the proof of Theorem \ref{thm:LMEP}. Note that the proof of Theorem \ref{thm:LMEP} basically starts from the above inequality. , That is, in Theorem \ref{thm:LMEP}, since $\Xh^n$ is the minimizer of  $ \hat{H}_k(u^n)+{\lambda\over n^2}\|Y^m-Au^n\|^2$ over all signals in $\Xc_b^n$, Therefore,
  \begin{align}
 \hat{H}_k&(\Xh^n) +{\lambda\over n^2}\|Y^m-A\Xh^n\|^2\nonumber\\
 & \leq  \hat{H}_k([X^n]_b)+ {\lambda\over n^2}\|Y^m-A[X^n]_b\|^2.\label{eq:B3}
\end{align}
The only difference between \eqref{eq:B2} and \eqref{eq:B3} is the term $\e b$, on the left hand side of \eqref{eq:B2}. However, the effect of this term is asymptotically negligible, as $\e$ is a free parameter that can be made arbitrary small.

\subsection{Proof of  Theorem \ref{cor:nodel-mismatch}}\label{app:C}

  The proof is very similar to the proof of Theorem \ref{thm:noisycoeffsbound}. Define events $\Ec_1$, $\Ec_2$ and $\Ec_3$ as in the proof of Theorem \ref{thm:noisycoeffsbound}. Then, following the same steps as before, conditioned on $\Ec_1\cap\Ec_2\cap\Ec_3$, we have
  \begin{align}\label{eq:UB-H-hat-mismatch-coeffs}
     {1\over b}\hat{H}_{k}([\Xh^n]_b) & \leq \bar{d}_o(\Xbbf)+\e'+\e,
  \end{align}
  where 
  \[
  \e'= 2\e_w +\d+{6\over b}.
  \]
  Unlike  Theorem \ref{thm:noisycoeffsbound}, here $\e_w$ does not go to zero and is a  given  constant. Therefore, we  rewrite \eqref{eq:UB-H-hat-mismatch-coeffs} as
  \begin{align}\label{eq:UB-H-hat-mismatch-coeffs-rewrite}
     {1\over b}\hat{H}_{k}([\Xh^n]_b) & \leq \bar{d}_o(\Xbbf)+3\e_w+\e'',
  \end{align}
  where 
  \[
  \e''\triangleq \d+{6\over b},
  \]
  can be made arbitrary small. Note that $\P(\Ec_1^c\cup\Ec_2^c\cup\Ec_3^c)$ still goes to zero, as $n$ goes to infinity, by the same argument  as before. The rest of the proof follows from the proof of Theorem 5 in   \cite{JalaliP:17-IT}. The difference is that since $\e_w$ does not go to zero in this case, we redefine the set $\Cc_n$ as 
  \[
  \Cc_n=\{u^n\in\Xc_b^n: {1\over nb}\ell_{\rm LZ}(u^n)\leq \bar{d}_o(\Xbbf)+3(\e_w+\d)\}.
  \]
   For $n$ large enough, $\e''\leq 2\d$. Therefore, conditioned on $\Ec_1\cap\Ec_2\cap\Ec_3$, for $n$ large enough,
$[X^n]_b,[\Xh^n]_b\in\Cc_n.$ Note that 
\[
|\Cc_n|\leq 2^{nb(\bar{d}_o(\Xbbf)+3(\e_w+\d))}.
\]
Define event $\Ec_4$ as done in the proof of Theorem \ref{thm:noisycoeffsbound}. The rest of the proof follows similar to the proof of  Theorem \ref{thm:noisycoeffsbound}. The only difference is that since $\e_w$ is fixed, to make sure that $\P(\Ec_4^c)$ converges to zero, almost surely, the number of measurements should be increased from $(1+\d)\bar{d}_o(\Xbbf)n$ to $(1+\d)(\bar{d}_o(\Xbbf)+3\e_w)n$.
   

\section{Conclusion}\label{sec:conclusion}
In this paper we have studied some fundamental  connections between two recently-proposed  compressed sensing recovery methods: i) L-MEP for universal compressed sensing and ii) Q-MAP for Bayesian compressed sensing. 
We have  shown that  a proper approximation of the cost function used in L-MEP yields a variant of the Q-MAP algorithm. The only different between the cost function derived from L-MEP's approximation and the original one used in Q-MAP is that they employ  different weights to promote the desired source structure. This have  motivated us to study the effect of the weights used by the Q-MAP optimization on its performance.  We have shown three different sets of weights for Q-MAP that asymptotically yield  the same performance. In practice, typically, the weights, which are a function of the source distribution,  are to be learned from available training datasets. In such cases,  we have  proved the robustness of the performance of Q-MAP to small error in  estimating the ideal weights. For non-vanishing estimations errors, we have characterized the required increase in the sampling rate to compensate for the error.

\bibliographystyle{unsrt}
\bibliography{myrefs}
\end{document}